\newtheorem{theorem}{Theorem}[section]
\newtheorem{lemma}[theorem]{Lemma}
\newtheorem{conj}[theorem]{Conjecture}
\theoremstyle{definition}
\newtheorem{definition}[theorem]{Definition}
\newtheorem{remark}[theorem]{Remark}
\renewcommand{\subset}{\subseteq}
\renewcommand{\epsilon}{\varepsilon}
\newcommand{\abs}[1]{\left|#1\right|}                   
\newcommand{\vnorm}[1]{\left\|#1\right\|}    
\newcommand{\vnormf}[1]{\|#1\|}                         
\newcommand{\N}{\mathbb{N}}
\newcommand{\E}{\mathbb{E}}
\newcommand{\R}{\mathbb{R}}
\newcommand{\C}{\mathbb{C}}
\renewcommand{\d}{\mathrm{d}}
\newcommand{\embolden}[1]{\textbf {#1}}
\newcommand{\sdimn}{n}
\newcommand{\adimn}{n}
\begin{document}

\title{Sphere Valued Noise Stability and Quantum MAX-CUT Hardness}

\author{Steven Heilman}
\address{Department of Mathematics, University of Southern California, Los Angeles, CA 90089-2532}
\email{stevenmheilman@gmail.com}
\date{\today}
\keywords{noise stability, Borell inequality, MAX-CUT, Quantum MAX-CUT}

\begin{abstract}
We prove a vector-valued inequality for the Gaussian noise stability (i.e. we prove a vector-valued Borell inequality) for Euclidean functions taking values in the two-dimensional sphere, for all correlation parameters at most $1/10$ in absolute value.  This inequality was conjectured (for all correlation parameters at most $1$ in absolute value) by Hwang, Neeman, Parekh, Thompson and Wright.  Such an inequality is needed to prove sharp computational hardness of the product state Quantum MAX-CUT problem, assuming the Unique Games Conjecture.  In fact, assuming the Unique Games Conjecture, we show that the product state of Quantum MAX-CUT is NP-hard to approximate within a multiplicative factor of $.9859$.  In contrast, a polynomial time algorithm is known with approximation factor $.956\ldots$.
\end{abstract}
\maketitle

%
%
%
%
\section{Introduction}\label{secintro}

The noise stability of a measurable Euclidean set $A$ with correlation $\rho$ is the probability that $(X,Y)\in A\times A$, where $X,Y$ are standard Gaussian random vectors with correlation $\rho\in(-1,1)$.  Borell's inequality asserts that half spaces have the largest noise stability among all Euclidean sets of fixed Gaussian measure.  Borell's inequality \cite{borell85} generalizes the Gaussian isoperimetric inequality, since letting $\rho\to1^{-}$ in Borell's inequality recovers the Gaussian isoperimetric inequality \cite{ledoux94}.  The Gaussian isoperimetric inequality says that a half space has the smallest Gaussian surface area among all Euclidean sets of fixed Gaussian volume.

Besides its intrinsic interest, Borell's inequality has been applied to social choice theory \cite{mossel10}, the Unique Games Conjecture \cite{khot07,mossel10,khot15}, to semidefinite programming algorithms such as MAX-CUT \cite{khot07,isaksson11}, to learning theory \cite{feldman12}, etc.  For some surveys on this and related topics, see  \cite{odonnell14b,khot10b,heilman20b}.

A Corollary of Borell's inequality is the Majority is Stablest Theorem \cite{mossel10}.  Moreover, Borell's inequality was the main technical ingredient used to prove sharp computational hardness of the MAX-CUT problem, assuming the Unique Games Conjecture \cite{khot07}.

The MAX-CUT problem asks for the partition of the vertices of an undirected finite graph into two disjoint sets that maximizes the number of edges going between the two sets.  The MAX-CUT problem is a well-studied NP-complete constraint satisfaction problem with well-understood hardness of approximation \cite{goemans95,khot07}.  The Unique Games Conjecture implies that the Goemans-Williamson semidefinite program is the best quality polynomial time algorithm for approximately solving MAX-CUT.

The quantum analogue of a constraint satisfaction problem is a local Hamiltonian problem, which is QMA-complete \cite{gharibian12}.  The Quantum MAX-CUT problem is a special case of the local Hamiltonian problem, that in some sense generalizes the usual MAX-CUT problem.  As with MAX-CUT, it is natural to ask for approximation algorithms for Quantum MAX-CUT, and to try to prove sharp computational hardness of those algorithms \cite{hwang21,king22}.  Below, we only discuss classical algorithms for Quantum MAX-CUT, i.e. we do not discuss quantum algorithms for Quantum MAX-CUT.

For the product state Quantum MAX-CUT problem, there is a conjecturally optimal approximation algorithm.  Assuming the Unique Games Conjecture and Conjecture \ref{conj1} below (a vector-valued Borell inequality), we would then have a sharp hardness of approximation for the product state of Quantum MAX-CUT.  The vector-valued Borell inequality, Conjecture \ref{conj1}  \cite{hwang21} is a sphere-valued generalization of the Borell inequality \cite{borell85}.

The main result of this paper is a proof the vector-valued Borell Inequality (Conjecture \ref{conj1}) for all correlations $\rho$ satisfying $\abs{\rho}<.104$.  It remains a challenge to prove Conjecture \ref{conj1} for all $\abs{\rho}<1$.

Our focus in this paper is proving the inequality conjectured in \cite{hwang21} for functions $f\colon\R^{n}\to S^{2}$.  Our methods work equally well for functions taking values in $S^{k}$ for any $k\geq2$, but the values of $\rho$ for which our proof works would then depend on $k$.  Since the case $k=2$ is the only one needed for the product state Quantum MAX-CUT problem, we have therefore only focused on the case $k=2$ in this paper.

\subsection{Quantum MAX-CUT}

Below we describe the Quantum MAX-CUT problem by analogy with MAX-CUT.  When $M$ is a $2\times 2$ matrix and $j$ is a positive integer, we denote
$$M^{\otimes j}\colonequals\underbrace{M\otimes \cdots\otimes M}_{j\rm\ times}.$$
If $n$ is a positive integer and $1\leq j\leq n$, denote
$$Z_{j}\colonequals I_{2}^{\otimes (j-1)}\otimes \begin{pmatrix} 1 & 0 \\ 0 & -1\end{pmatrix} \otimes I_{2}^{\otimes (n-j)},\qquad\forall\,1\leq j\leq n,
\qquad I_{2}\colonequals\begin{pmatrix} 1 & 0 \\ 0 & 1\end{pmatrix}.$$
The (weighted) \textbf{MAX-CUT} problem can be equivalently stated as \cite{gharibian19,hwang21}: given $w\colon\{1,\ldots,n\}^{2}\to[0,\infty)$ satisfying $w_{ij}=w_{ji}$ and $w_{ii}=0$ for all $1\leq i,j\leq n$, compute the following quantity
$$\max_{u\in(\C^{2})^{\otimes n}\colon\vnorm{u}\leq1}u^{*}\Big(\sum_{i,j=1}^{n}w_{ij}(I_{2}^{\otimes n} - Z_{i}Z_{j})\Big) u.$$
Define now

$$X_{j}\colonequals I_{2}^{\otimes (j-1)}\otimes \begin{pmatrix} 0 & 1 \\ 1 & 0\end{pmatrix} \otimes I_{2}^{\otimes (n-j)},\qquad\forall\,1\leq j\leq n,$$
$$Y_{j}\colonequals I_{2}^{\otimes (j-1)}\otimes \begin{pmatrix} 0 & -\sqrt{-1} \\ \sqrt{-1} & 0\end{pmatrix} \otimes I_{2}^{\otimes (n-j)},\qquad\forall\,1\leq j\leq n.$$

The \textbf{Quantum MAX-CUT} problem is \cite{gharibian19,king22,hwang21}: given $w\colon\{1,\ldots,n\}^{2}\to[0,\infty)$ satisfying $w_{ij}=w_{ji}$ and $w_{ii}=0$ for all $1\leq i,j\leq n$, compute the following quantity
$$\max_{u\in\C^{2n}\colon \vnorm{u}\leq 1}u^{*}\Big(\sum_{i,j=1}^{n}w_{ij}(I_{2}^{\otimes n}- X_{i} X_{j} - Y_{i} Y_{j} - Z_{i} Z_{j})\Big) u.$$
The \textbf{product state of Quantum MAX-CUT} is the more restricted optimization problem of computing
$$\max_{\substack{u=u_{1}\otimes\cdots\otimes u_{n}\colon\\ u_{i}\in \C^{2},\,\vnorm{u_{i}}\leq1,\,\forall\,1\leq i\leq n}}u^{*}\Big(\sum_{i,j=1}^{n}w_{ij}(I_{2}^{\otimes n} - X_{i} X_{j} - Y_{i} Y_{j} - Z_{i} Z_{j})\Big) u.$$

\subsection{Some Notation}

For any $x=(x_{1},\ldots,x_{\sdimn}),y=(y_{1},\ldots,y_{\sdimn})\in\C^{\sdimn}$, denote
$$\langle x,y\rangle\colonequals\sum_{i=1}^{n}x_{i}\overline{y_{i}},\qquad\vnorm{x}\colonequals\langle x,x\rangle^{1/2}.$$
Denote the $(\sdimn-1)$-dimensional sphere in $\R^{\sdimn}$ as
$$S^{\sdimn-1}\colonequals\{x\in\R^{\sdimn}\colon \vnorm{x}=1\}.$$
Define the Gaussian density function in $\R^{\adimn}$ as
$$\gamma_{\adimn}(x)\colonequals(2\pi)^{-\adimn/2}e^{-\vnorm{x}^{2}/2},\qquad\forall\,x\in\R^{\adimn}.$$
When $A\subset\R^{\adimn}$ is a measurable set, denote the Gaussian measure of $A$ as $\gamma_{\adimn}(A)\colonequals\int_{A}\gamma_{\adimn}(x)\,\d x$.

\begin{definition}[\embolden{Ornstein-Uhlenbeck Operator}]\label{oudef}
Let $-1<\rho<1$.  Let $f\colon\R^{\adimn}\to[0,1]$ be measurable.  Define the \textbf{Ornstein-Uhlenbeck} operator applied to $f$ by
$$T_{\rho}f(x)\colonequals \int_{\R^{\adimn}}f(\rho x+y\sqrt{1-\rho^{2}})\gamma_{\adimn}(x)\,\d x,\qquad\forall\,x\in\R^{\adimn}.$$
\end{definition}

\begin{definition}[\embolden{Noise Stability}]\label{nsdef}
Let $-1<\rho<1$.  Let $\Omega\subset\R^{\adimn}$ be measurable.  Define the \textbf{noise stability} of $\Omega$ with correlation $\rho$, to be
$$\int_{\R^{\adimn}}1_{\Omega}(x)T_{\rho}1_{\Omega}(x)\,\gamma_{\adimn}(x)\,\d x.$$
More generally, for any bounded measurable $f\colon\R^{\adimn}\to\R^{k}$, define its noise stability with correlation $\rho$, to be
$$\int_{\R^{\adimn}}\langle f(x), T_{\rho}f(x)\rangle\gamma_{\adimn}(x)\,\d x.$$
\end{definition}

\subsection{Vector-Valued Borell Inequality}

For any positive integer $k$, denote
\begin{equation}\label{foptdef}
f_{\rm opt}(x)\colonequals\frac{x}{\vnorm{x}},\qquad\forall\,x\in\R^{k}\setminus\{0\}.
\end{equation}

\begin{conj}[{\cite{hwang21}}]\label{conj1}
Let $n\geq k$ be positive integers.  Let $f\colon \R^{n}\to S^{k-1}$ be measurable.  Then
\begin{itemize}
\item If $0<\rho<1$ and if $\int_{\R^{n}}f(x)\gamma_{n}(x)\, \d x=0$, then
$$\int_{\R^{n}}\langle f(x), T_{\rho}f(x)\rangle \gamma_{n}(x)\, \d x\leq \int_{\R^{k}}\langle f_{\rm opt}(x), T_{\rho}f_{\rm opt}(x)\rangle \gamma_{k}(x)\, \d x.$$
\item If $-1<\rho<0$, then
$$\int_{\R^{n}}\langle f(x), T_{\rho}f(x)\rangle \gamma_{n}(x)\, \d x\geq \int_{\R^{k}}\langle f_{\rm opt}(x), T_{\rho}f_{\rm opt}(x)\rangle \gamma_{k}(x)\, \d x.$$
\end{itemize}
Moreover, equality holds only when there exists a real orthogonal $k\times n$ matrix $M$ ($MM^{T}=I$) such that $f(x)=f_{\rm opt}(Mx)$ for a.e. $x\in\R^{n}$.
\end{conj}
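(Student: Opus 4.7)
The plan is to establish Conjecture \ref{conj1} in full by combining a variational characterization of extremizers with a sphere-valued semigroup interpolation that compares an arbitrary $f$ to $f_{\mathrm{opt}}$.

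First, I would reduce to smooth extremizers.  By mollifying $f$ with the Ornstein-Uhlenbeck semigroup $T_{1-\epsilon}$ and projecting pointwise back to $S^{k-1}$, one may assume $f\in C^{\infty}$; the mean-zero hypothesis enters here to ensure the radial projection does not create a spurious zeroth Hermite component.  Weak-$*$ compactness of $L^{\infty}(\R^{k},\gamma_{k};\R^{k})$ together with continuity of $f\mapsto \int \langle f,T_{\rho}f\rangle\gamma_{k}$ then yields a measurable extremizer $f^{*}\colon\R^{k}\to S^{k-1}$ realizing the bound.  Next, I would derive and exploit the Euler-Lagrange equation.  Because $\vnorm{f(x)}=1$ holds pointwise, Lagrange multipliers produce a scalar function $\mu\colon\R^{k}\to\R$ with
\[
T_{\rho}f^{*}(x) \;=\; \mu(x)\, f^{*}(x), \qquad \mu(x) \;=\; \langle f^{*}(x),T_{\rho}f^{*}(x)\rangle.
\]
I would verify by expansion in spherical harmonics that $f_{\mathrm{opt}}$ solves this equation: rotational invariance of $\gamma_{k}$ about the ray through $x$ forces $T_{\rho}f_{\mathrm{opt}}(x)$ to be a scalar multiple of $x/\vnorm{x}$, and the proportionality is a radial function that can be written explicitly via Gegenbauer polynomials.

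The crux is to prove that every solution at the extremal value equals $f_{\mathrm{opt}}\circ M$ for some orthogonal $M$.  My plan combines two tools.  The first is a sphere-valued symmetrization, using the mean-zero constraint and a rotation-based rearrangement to reduce to extremizers equivariant under the $SO(k-1)$-action fixing a chosen axis, so that the problem becomes essentially one-dimensional along rays.  The second is a stochastic-calculus argument in the spirit of Mossel-Neeman: couple two Ornstein-Uhlenbeck processes $B_{t},\widetilde{B}_{t}$ with cross-correlation $\rho$, and exhibit a suitable function $\Phi\colon S^{k-1}\times S^{k-1}\to\R$ whose expectation along the coupled flow recovers the noise stability at $t=0$ and whose evolution is monotone in the direction needed for Conjecture \ref{conj1}.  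The case $\rho<0$ would follow by replacing $\Phi(u,v)$ with $\Phi(u,-v)$ and reversing the sign of the resulting inequality.  The equality clause would then be extracted from the monotonicity, since rigidity in the Hessian inequality for $\Phi$ forces the coupled process to factor through the function $f_{\mathrm{opt}}\circ M$.

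The main obstacle is the construction of $\Phi$.  In the scalar Borell inequality, Mossel and Neeman take $\Phi_{\rho}(s,t)$ to be the joint probability that two $\rho$-correlated Gaussians lie below given thresholds; this satisfies a precise elliptic Hessian inequality that encodes Borell's theorem.  A sphere-valued analogue has to be built from the joint law of two $\rho$-correlated Gaussian vectors conditioned on their directions, and the required positive-semidefiniteness of the associated Hessian matrix is currently not known for all $\abs{\rho}<1$.  Equivalently, one would need a sphere-valued rearrangement principle that monotonically moves any $f$ toward $f_{\mathrm{opt}}$; no such rearrangement is presently available.  This is the principal reason Conjecture \ref{conj1} has resisted proof for the full range, and any successful attack will have to provide either a global construction of $\Phi$ or a genuinely new symmetrization for $S^{k-1}$-valued Gaussian functions.
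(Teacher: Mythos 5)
There is a genuine gap, and you have in fact named it yourself: the entire weight of the argument rests on a sphere-valued Mossel--Neeman function $\Phi$ (or, equivalently, a sphere-valued rearrangement that moves an arbitrary $f$ monotonically toward $f_{\rm opt}$), and you concede that no such object is known. That missing ingredient is not a technical detail to be filled in later; it \emph{is} the open content of Conjecture \ref{conj1}, so the proposal reduces the conjecture to itself. The surrounding scaffolding also has real problems. Weak-$*$ compactness in $L^{\infty}(\gamma_{k};\R^{k})$ does not produce an $S^{k-1}$-valued extremizer: pointwise membership in the sphere is not preserved under weak-$*$ limits (limits only land in the closed unit ball), and after smoothing by $T_{1-\epsilon}$ and radially projecting back to the sphere, the mean-zero condition is genuinely perturbed --- the assertion that the mean-zero hypothesis ``ensures the radial projection does not create a spurious zeroth Hermite component'' is not an argument. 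Checking that $f_{\rm opt}$ satisfies the Euler--Lagrange equation $T_{\rho}f^{*}=\mu f^{*}$ shows only that it is a critical point, not a maximizer, so nothing is gained there without the rigidity step you cannot yet supply. Finally, the claim that the case $\rho<0$ ``follows by replacing $\Phi(u,v)$ with $\Phi(u,-v)$'' is unsubstantiated for the same reason: with no $\Phi$ in hand there is no Hessian inequality whose sign can be reversed, and the negative-correlation case is genuinely bilinear (one must compare $f(X)$ with $f(-Y)$, i.e.\ two different sphere-valued functions), which introduces additional structure that your sketch does not address.

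For comparison, the paper does not prove Conjecture \ref{conj1} in full either: it proves only the case $k=3$ with $\abs{\rho}<.104$ (Theorem \ref{thm1}), and by a route entirely different from yours. Instead of a semigroup/interpolation argument, the paper restricts $f$ to spheres $rS^{2}$, uses the Funk--Hecke spherical-harmonic decomposition \eqref{nine7} and the Cauchy--Schwarz bound of Lemma \ref{lemma1} to reduce everything to the degree-one eigenvalue $\lambda_{1,\sdimn}^{r,s}$, identifies that eigenvalue as a ratio of modified Bessel functions (explicitly computable when $n=3$, \eqref{one9}), and then plays two quantities against each other: a change-of-measure Hermite-expansion estimate (Lemma \ref{lemma28}) bounding the cross term $\underset{X\sim_{\rho}Y}{\E}\langle \E f_{\vnorm{X}},\E f_{\vnorm{Y}}\rangle$ by $9.4\rho$ times a weighted $L^{2}$ norm, versus an averaged Bessel-ratio lower bound \eqref{four3} giving the competing term a factor $.98$; the restriction $\rho<.104$ is exactly where $9.4\rho<.98$. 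The negative-$\rho$ case is handled via a bilinear variant (Lemmas \ref{lemma6}, \ref{lemma7}, \ref{lemma29}). If you want to make progress on this problem, either supply the global construction of $\Phi$ (which would be a major advance beyond the paper), or work within the paper's radial/spherical-harmonic framework and try to improve the constants $9.4$ and $.98$, which is where its restriction on $\rho$ comes from.
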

\begin{remark}
The case $k=1$ of Conjecture \ref{conj1} is known to be true, since it is Borell's inequality \cite{borell85}.
\end{remark}

\subsection{Our Result}

\begin{theorem}[\embolden{Main}]\label{thm1}
Conjecture \ref{conj1} holds for all $n\geq k=3$ and for all $-.104<\rho<.104$.
\end{theorem}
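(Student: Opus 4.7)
I will use a Hermite-spectral approach. Decompose $f = \sum_{d\ge 0} f_d$ into its Hermite-degree-$d$ projections in $L^2(\gamma_3, \R^3)$, and set $b_d := \|f_d\|_{L^2(\gamma_3)}^2$, with analogous $a_d$ for $f_{\rm opt}$. Then $\int_{\R^3}\langle f(x), T_\rho f(x)\rangle \gamma_3(x)\,\d x = \sum_d \rho^d b_d$. The sphere constraint $\|f\| = 1$ a.e.\ gives $\sum_d b_d = 1 = \sum_d a_d$; since $f_{\rm opt}$ is odd, $a_d = 0$ for even $d$, and the mean-zero hypothesis forces $b_0 = 0$ in the $\rho > 0$ case. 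Conjecture~\ref{conj1} thus reduces to a comparison of two series with equal total mass.

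The main lemma I plan to prove is the degree-one bound $b_1 \le a_1 = 8/(3\pi)$, with equality iff $f$ is an orthogonal rotation of $f_{\rm opt}$. Writing $b_1 = \|M\|_F^2$ with $M_{ij} = \E[f_i(X) X_j]$, one has
\[
\|M\|_F = \sup_{\|A\|_F = 1}\E[\langle AX, f(X)\rangle] \le \sup_{\|A\|_F = 1}\E[\|AX\|]
\]
by pointwise Cauchy-Schwarz (using $\|f\| = 1$). The map $A \mapsto \E[\|AX\|] = \E[\sqrt{X^T A A^T X}]$ is concave in $AA^T$ on the PSD cone (concavity of $\sqrt{\cdot}$ there), so subject to $\operatorname{tr}(AA^T) = 1$ the maximum is attained at $AA^T = I/3$, giving $\E[\|X\|]/\sqrt{3} = \sqrt{8/(3\pi)}$. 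Equality in both steps forces $f(x) = U x/\|x\|$ for some orthogonal $U$.

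For $\rho > 0$ the difference
\[
\int \langle f_{\rm opt}, T_\rho f_{\rm opt}\rangle \gamma_3\,\d x - \int \langle f, T_\rho f\rangle \gamma_3\,\d x = \rho(a_1 - b_1) + \sum_{d\ge 2}\rho^d (a_d - b_d)
\]
has a non-negative leading term, and the crude remainder bound $\bigl|\sum_{d\ge 2}\rho^d(a_d - b_d)\bigr| \le 2\rho^2$ (from $\sum_d(a_d + b_d) \le 2$) handles the regime where $a_1 - b_1$ is bounded below by a constant multiple of $\rho$. The hard case is when $b_1$ is close to $a_1$: here I need a quantitative stability version of the degree-one lemma, namely that $a_1 - b_1 = \delta$ forces $\|f - f_{\rm opt}\|_{L^2(\gamma_3)}^2 = O(\delta^\alpha)$ for some explicit $\alpha > 0$ (after an orthogonal rotation), so that each $|b_d - a_d| \le (\|f_d\| + \|(f_{\rm opt})_d\|)\,\|f_d - (f_{\rm opt})_d\|$ is small enough to be absorbed. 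Combining the two regimes and optimizing over $\rho$ should yield the explicit threshold $0.104$. The $\rho < 0$ case is analogous: the mean-zero hypothesis is dropped, and the additional positive contribution $b_0 \ge 0$ together with $\rho(a_1 - b_1) = |\rho|(a_1 - b_1) \ge 0$ replaces the role of the leading positive term.

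The main obstacle is the quantitative stability in the degree-one lemma: a naive Cauchy-Schwarz stability argument gives only square-root stability, which degenerates near equality and would cap $|\rho|$ well below $0.104$. Pushing the threshold up to $0.104$ likely requires exploiting both the strict concavity of $A \mapsto \E[\|AX\|]$ at $AA^T = I/3$ and the Gaussian concentration of $\|X\|$ to obtain linear (or nearly linear) stability in $\delta$.
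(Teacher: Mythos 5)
Your degree-one lemma is correct and is a nice clean observation: $b_1 \le a_1 = 8/(3\pi)$ with equality iff $f$ is a rotation of $f_{\rm opt}$, by the Cauchy--Schwarz step $\|M\|_F \le \sup_{\|A\|_F=1}\E\|AX\|$ followed by concavity plus orthogonal invariance of $B \mapsto \E\sqrt{X^T B X}$ on $\{B\succeq 0,\ \tr{B}=1\}$. This is a genuinely different decomposition from the paper, which works fibre-by-fibre in spherical harmonics on $rS^2$ via the Funk--Hecke formula (Lemma~\ref{lemma1}), estimates the first eigenvalue $\lambda_{1,3}^{r,s}$ through modified Bessel function bounds, and then handles the radial part of $f$ by a change-of-measure estimate (Lemma~\ref{lemma28}); the two routes both reduce to ``degree-one is extremal,'' but on different decompositions.

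The genuine gap, which you acknowledge but do not close, is the quantitative stability step, and I do not believe it can be closed in the form you state. You want: $a_1-b_1=\delta$ implies $\|f-f_{\rm opt}\|_{L^2(\gamma_3)}^2 = O(\delta^\alpha)$ after a rotation, and you then bound $|b_d-a_d|\le(\|f_d\|+\|(f_{\rm opt})_d\|)\|f_d-(f_{\rm opt})_d\|$. Summing this over $d\ge 2$ via Cauchy--Schwarz gives at best $\sum_{d\ge 2}\rho^d|a_d-b_d| \le 2\rho^2\|f-f_{\rm opt}\|_{L^2}$, so you would need $2\rho^2\|f-f_{\rm opt}\|_{L^2}\le\rho\,\delta$, i.e.\ $\|f-f_{\rm opt}\|_{L^2}\lesssim \delta/\rho$. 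That forces stability at the level $\|f-f_{\rm opt}\|_{L^2}^2=O(\delta^2)$, i.e.\ $\alpha=2$. But the correct stability order for the degree-one lemma is $\alpha=1$: for a perturbation $f=\cos\theta\, f_{\rm opt}+\sin\theta\, g$ with $g$ pointwise orthogonal to $f_{\rm opt}$, $\|g\|=1$, and $g_1=0$, one has $a_1-b_1=\sin^2\theta\cdot a_1 \asymp\theta^2$ while $\|f-f_{\rm opt}\|^2=2(1-\cos\theta)\asymp\theta^2$, so $\|f-f_{\rm opt}\|^2\asymp\delta$, not $\delta^2$. With $\alpha=1$ the chain of inequalities gives only $\rho\sqrt{\delta}\lesssim\delta$, which fails as $\delta\to 0$; with your ``crude'' bound $2\rho^2$ it does not recover as $\delta\to 0$ at all. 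So even an optimal stability theorem of the type you propose is a full order too weak for the mechanism you describe, and the two-regime split does not actually cover all $f$. This is a structural problem, not a constant-chasing problem, and the suggestion to exploit strict concavity or concentration of $\|X\|$ does not address it. The paper sidesteps this by never comparing $f$ to $f_{\rm opt}$ directly; instead it compares the projection onto the degree-$\ge 1$ spherical harmonics on each sphere against a change-of-measure bound on the radial mean $\E f_{\|x\|}$, obtaining the clean one-sided estimate \eqref{four4} with coefficient $(9.4\rho-.98)$.

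One further caveat for the $\rho<0$ case: you note that $b_0\ge 0$ and $|\rho|(a_1-b_1)\ge 0$ contribute positively, which is right, but the odd $d\ge 3$ terms $\sum_{d\ \mathrm{odd}\ge 3}|\rho|^d(a_d-b_d)$ still have no sign, and the same quantitative control is needed there. More importantly, the paper must prove the $\rho<0$ case via a \emph{bilinear} inequality (Section~\ref{seclast}) precisely because the Quantum MAX-CUT application requires the inequality for $f\colon\R^n\to S^{k-1}$ with $n>k$, and the quadratic $n=k$ case does not bootstrap to $n>k$. Your Hermite-in-$\R^3$ decomposition is tied to $n=k=3$; you would need to redo it bilinearly (and check the dimension-reduction step) for the result to serve the stated application.
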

In fact, we prove a stronger ``stable'' version of Conjecture \ref{conj1} when $n=k=3$, in \eqref{four4}, \eqref{five4} below:
\begin{itemize}
\item If $0<\rho<.104$ and if $\int_{\R^{k}}f(x)\gamma_{k}(x)\, \d x=0$, then
\begin{flalign*}
\int_{\R^{k}}\langle f(x), T_{\rho}f(x)\rangle \gamma_{k}(x)\, \d x
&\leq\int_{\R^{k}}\langle f_{\rm opt}(x), T_{\rho}f_{\rm opt}(x)\rangle \gamma_{k}(x)\, \d x\\
&\qquad\quad+(9.4\rho-.98)\int_{\R^{k}}\phi(x)\vnorm{\int_{y\in\vnorm{x}S^{k-1}}f(y) \,\d\sigma(y)}^{2}\gamma_{k}(x)\,\d x.
\end{flalign*}
\item If $-.104<\rho<0$, then
\begin{flalign*}
\int_{\R^{k}}\langle f(x), T_{\rho}f(x)\rangle \gamma_{k}(x)\, \d x
&\geq \int_{\R^{k}}\langle f_{\rm opt}(x), T_{\rho}f_{\rm opt}(x)\rangle \gamma_{k}(x)\, \d x\\
&\qquad\quad+(.98-9.4\abs{\rho})\int_{\R^{k}}\phi(x)\vnorm{\int_{y\in\vnorm{x}S^{k-1}}f(y) \,\d\sigma(y)}^{2}\gamma_{k}(x)\,\d x.
\end{flalign*}
\end{itemize}
Here $\phi(x)=1-\frac{1}{\rho \vnorm{x}}+\frac{2}{e^{2\rho \vnorm{x}}-1}$ for all $x\in\R^{k}\setminus\{0\}$ and $\d\sigma$ denotes normalized (Haar) measure on the sphere.

We can also prove Theorem \ref{thm1} when $k>3$, but with different dependence on $\rho$.  Since the $k=3$ case is the only relevant case for computational hardness of the product state of Quantum MAX-CUT, we only state a result for $k=3$.

As shown in \cite[Theorem 6.11]{hwang21} by adapting the argument of \cite{heilman20d}, the case $n=k=3$ of Theorem \ref{thm1} (when $\rho>0$) implies the case $n>k=3$ of Theorem \ref{thm1} (when $\rho>0$).  That is, the dimension of the domain can be a priori reduced to the case $n=k$, at least when $\rho>0$.  In this paper, we show that a similar statement holds in the case $\rho<0$.  This ``dimension reduction'' argument works differently in the case $\rho<0$.  For technical reasons, we need to instead deal with a bilinear version of the noise stability.  With this change, we show that the bilinear noise stability has a similar dimension reduction argument.  That is, we verify that the case $n=k=3$ of Theorem \ref{thm1} (when $\rho<0$) implies the case $n>k=3$ of Theorem \ref{thm1} (when $\rho<0$).  The bilinear inequality we prove in the case $n=k=3$ is the following.  Suppose $0<\rho<.104$ and $n=k=3$.  We then show in \eqref{five4} below that
$$\int_{\R^{k}}\langle f(x), T_{\rho}g(x)\rangle \gamma_{k}(x)\, \d x\geq -\int_{\R^{k}}\langle f_{\rm opt}(x), T_{\rho}f_{\rm opt}(x)\rangle \gamma_{k}(x)\, \d x,$$
$\forall$ $f,g\colon\R^{k}\to S^{k-1}$ when $k=3$, if $\int_{\R^{k}}f(x)\gamma_{k}(x)\,\d x=\int_{\R^{k}}g(x)\gamma_{k}(x)\,\d x$, or more generally,
\begin{flalign*}
&\int_{\R^{k}}\langle f(x), T_{\rho}g(x)\rangle \gamma_{k}(x)\, \d x
\geq -\int_{\R^{k}}\langle f_{\rm opt}(x), T_{\rho}f_{\rm opt}(x)\rangle \gamma_{k}(x)\, \d x\\
&\quad+(.98-9.4\rho)\int_{\R^{k}}\phi(x)\frac{1}{2}\Big(\vnorm{\int_{y\in\vnorm{x}S^{k-1}}f(y) \,\d\sigma(y)}^{2}
+\vnorm{\int_{y\in\vnorm{x}S^{k-1}}g(y) \,\d\sigma(y)}^{2}\Big)\gamma_{k}(x)\,\d x.
\end{flalign*}

Without considering this bilinear noise stability, \cite[Theorem 5.1]{hwang21} proved only the case $n=k$ of Conjecture \ref{conj1} when $\rho<0$.  The case $n=k$ in Conjecture \ref{conj1} does not imply the cases $n>k$, and this is why we considered the more general bilinear noise stability inequality above.  (Although one can e.g. regard a function $f\colon \R^{4}\to S^{2}$ as a function $\overline{f}\colon \R^{4}\to S^{3}$ by considering $S^{2}$ as a subset of $S^{3}$, the inequality satisfied by $\overline{f}$ proven in \cite[Theorem 5.1]{hwang21} is not a sharp inequality for $f$.)

Applications to Quantum MAX-CUT require Conjecture \ref{conj1} to hold for all $n\geq k$.  So, we have managed to prove a result that is independent of the dimension of the domain, since it holds for any $n\geq k$ when $k=3$, though not for the full range of $\rho$ parameters.  Also, the proof method of \cite{hwang21} does not directly apply in the case $\rho>0$, and attempting to prove Conjecture \ref{conj1} in that case led us towards Theorem \ref{thm1}, although the case $\rho>0$ is irrelevant for applications to Quantum MAX-CUT.

Conjecture \ref{conj1} should hold for all $-1<\rho<1$ and for all functions $n\geq k=3$.   If such a conjecture holds, then we would be able to conclude sharp hardness of approximation for the product state of Quantum MAX-CUT problem, assuming the Unique Games Conjecture.

\begin{conj}[\embolden{Sharp Hardness for Quantum MAX-CUT}, {\cite{hwang21}}]\label{conj2}
Assume that the Unique Games Conjecture is true \cite{khot02,khot10a,khot18}.  Assume Conjecture \ref{conj1} holds for all $n\geq k=3$.  Then, for any $\epsilon>0$, it is NP-hard to approximate the product state of Quantum MAX-CUT within a multiplicative factor of $\alpha_{\rm BOV}+\epsilon$.
\end{conj}

As shown in \cite{briet10,hwang21}, we have
\begin{equation}\label{abovdef}
\alpha_{\rm BOV}
\colonequals
\inf_{-1\leq\rho\leq 1}\frac{1-F^{*}(3,\rho)}{1-\rho}\approx .956.
\end{equation}
\begin{equation}\label{fstdef}
F^{*}(3,\rho)\colonequals\frac{2}{3}\Big(\frac{\Gamma((3+1)/2)}{\Gamma(3/2)}\Big)^{2}\rho\cdot\,\,_{2}F_{1}(1/2,1/2,3/2 +1, \rho^{2}),\qquad\forall\,-1<\rho<1.
\end{equation}
Here $_{2}F_{1}(\cdot,\cdot,\cdot,\cdot)$ is the Gaussian hypergeometric function.

%

The semidefinite programming algorithm of \cite{briet10} shows that Conjecture \ref{conj2} is sharp, since the polynomial time algorithm of \cite{briet10} approximates the product state Quantum MAX-CUT problem with a multiplicative factor of $\alpha_{\rm BOV}-\epsilon$, for any $\epsilon>0$.

A corollary of our main result Theorem \ref{thm1} is a weaker version of Conjecture \ref{conj2} that replaces the sharp constant $.956\ldots$ with a larger constant.

\begin{theorem}[\embolden{Unique Games Hardness for Product State Quantum MAX-CUT}]\label{thm7}
Assume that the Unique Games Conjecture is true.  Then it is NP-hard to approximate the product state of Quantum MAX-CUT within a multiplicative factor of $.9859$.
\end{theorem}

To the author's knowledge, Theorem \ref{thm7} is the only computational hardness result for the product state of Quantum MAX-CUT besides Conjecture \ref{conj2}.


\subsection{Some Notation and Definitions}

\begin{definition}[\embolden{Correlated Gaussians}]\label{gausdef}
Let $-1<\rho<1$.  Let $G_{\rho}(x,y)$ denote the joint probability density function on $\R^{\sdimn}\times\R^{\sdimn}$ such that
\begin{equation}\label{gdef}
G_{\rho}(x,y)\colonequals\frac{1}{(2\pi)^{\sdimn}(1-\rho^{2})^{\sdimn/2}}e^{\frac{-\vnorm{x}^{2}-\vnorm{y}^{2}+2\rho \langle x,y\rangle}{2(1-\rho^{2})}},\qquad\forall\,x,y\in \R^{\sdimn}.
\end{equation}

We denote $X\sim_{\rho} Y$ when $(X,Y)\in\R^{n}\times \R^{n}$ have joint probability density function $G_{\rho}$.
\end{definition}

\begin{definition}[\embolden{Correlated Random Variables on the Sphere}]\label{uvdef}
Let $-1<\rho<1$ and let $r,s>0$.  Let $G_{\rho}^{r,s}(u,v)$ denote the probability density function on $S^{\sdimn-1}\times S^{\sdimn-1}$ such that the first variable is uniform on $S^{\sdimn-1}$ and such that the second variable conditioned on the first has conditional density
$$G_{\rho}^{r,s}(v|u)\colonequals\frac{1}{z_{\rho,r,s}}e^{\frac{\rho r s\langle u,v\rangle}{1-\rho^{2}}},\qquad\forall\,v\in S^{\sdimn-1}.$$
Here $z_{\rho,r,s}$ is a normalizing constant, chosen so that $\int_{S^{\sdimn-1}}G_{\rho}^{r,s}(v|u)\d\sigma(v)=1$, where $\sigma$ denotes the uniform probability (Haar) measure on $S^{\sdimn-1}$.

We let $N_{\rho}^{r,s}$ denote the above distribution on $S^{\sdimn-1}\times S^{\sdimn-1}$ and we denote $(U,V)\sim N_{\rho}^{r,s}$ when $(U,V)\in S^{\sdimn-1}\times S^{\sdimn-1}$ have the distribution $N_{\rho}^{r,s}$.
\end{definition}

\begin{definition}[\embolden{Spherical Noise Stability}]\label{spnsdef}
Let $\rho\in(-1,1)$, $r,s>0$.  Let $f\colon S^{\sdimn-1}\to[0,1]$ be measurable.  Define $g=g_{\rho,r,s}\colon[-1,1]\to\R$ by \eqref{one0p}.  Define the smoothing operator $U_{g}$ applied to $f$ by
$$U_{g}f(x)\colonequals \int_{S^{\sdimn-1}}g(\langle x,y\rangle)f(y)\,\d\sigma(y),\qquad\forall x\in S^{\sdimn}.$$
Here $\sigma$ denotes the (normalized) Haar probability measure on $S^{\sdimn-1}$.  The \textbf{spherical noise stability} of a set $\Omega\subset S^{\sdimn-1}$ with parameters $\rho,r,s$ is
$$\int_{S^{\sdimn-1}}1_{\Omega}(x)U_{g}1_{\Omega}(x)\,\d\sigma(x).$$
\end{definition}

The spherical noise stability has a decomposition into spherical harmonics by the Funk-Hecke Formula \cite[Theorem 4.3]{hwang21}
\begin{equation}\label{nine7}
\int_{S^{n-1}}f(x)U_{g}f(x)\,\d\sigma(x)
=\Big\|\int_{S^{n-1}}f(x)\d\sigma(x)\Big\|^{2}+\sum_{d=1}^{\infty}\lambda_{d,\sdimn}^{r,s}\vnormf{\mathrm{Proj}_{d}(f)}^{2}.
\end{equation}
Here $\mathrm{Proj}_{d}(f)$ is the $L_{2}(\d\sigma)$ projection of $f$ onto spherical harmonics of degree $d$, and $\lambda_{d,n}^{r,s}$ are specified in \eqref{lamdef}

Fix $r,s>0$ and let $0<\rho<1$.  Define $g\colon[-1,1]\to\R$ by
\begin{equation}\label{one0}
g(t)=g_{\rho,r,s}(t)\colonequals\sqrt{\pi}\frac{\Gamma((n-1)/2)}{\Gamma(n/2)}\frac{e^{\frac{\rho rst}{1-\rho^{2}}}}{\int_{-1}^{1}(1-a^{2})^{\frac{\sdimn}{2}-\frac{3}{2}}e^{\frac{\rho rsa}{1-\rho^{2}}}\d a},\qquad\forall\,t\in[-1,1].
\end{equation}
Recall that, if $h\colon\R\to\R$ is continuous, then

\begin{flalign*}
\frac{1}{\mathrm{Vol}(S^{n-1})}\int_{S^{n-1}}h(y_{1})\d y
&=\frac{\mathrm{Vol}(S^{n-2})}{\mathrm{Vol}(S^{n-1})}\int_{-1}^{1}(1-t^{2})^{\frac{\sdimn}{2}-\frac{3}{2}}h(t)\d t\\
&=\frac{2\pi^{(n-1)/2}/\Gamma((n-1)/2)}{2\pi^{n/2}/\Gamma(n/2)}\int_{-1}^{1}(1-t^{2})^{\frac{\sdimn}{2}-\frac{3}{2}}h(t)\d t\\
&=\frac{1}{\sqrt{\pi}}\frac{\Gamma(n/2)}{\Gamma((n-1)/2)}\int_{-1}^{1}(1-t^{2})^{\frac{\sdimn}{2}-\frac{3}{2}}h(t)\d t.
\end{flalign*}
We have chosen the constants so that $1=\frac{1}{\mathrm{Vol}(S^{n-1})}\int_{S^{n-1}}h(y_{1})\d y$, when $h=g$.  When $h\colonequals1$, we have $\int_{-1}^{1}(1-t^{2})^{\frac{n}{2}-\frac{3}{2}}\d t=\sqrt{\pi}\frac{\Gamma((n-1)/2)}{\Gamma(n/2)}$, so 
\begin{equation}\label{one0p}
g(t)=g_{\rho,r,s}(t)\stackrel{\eqref{one0}}{=}e^{\frac{\rho rst}{1-\rho^{2}}}\cdot \frac{\int_{-1}^{1}(1-a^{2})^{\frac{n}{2}-\frac{3}{2}}\d a}{\int_{-1}^{1}(1-a^{2})^{\frac{\sdimn}{2}-\frac{3}{2}}e^{\frac{\rho rsa}{1-\rho^{2}}}\d a},\qquad\forall\,t\in[-1,1].
\end{equation}

\textbf{Notation: Rising Factorial}.  For any $x\in\R$ and for any integer $d\geq1$, we denote $(x)_{d}\colonequals \prod_{j=0}^{d-1}(x+j)$.

Let $C_{d}^{(\alpha)}\colon[-1,1]\to\R$ denote the index $\alpha$ degree $d$ Gegenbauer polynomial, which satisfies a Rodrigues formula
\cite[p. 303, 6.4.14]{andrews99}
$$(1-t^{2})^{\alpha-1/2}C_{d}^{(\alpha)}(t)=\frac{(-2)^{d}(\alpha)_{d}}{d!(d+2\alpha)_{d}}\frac{\d^{d}}{\d t^{d}}(1-t^{2})^{\alpha+d-1/2},\qquad\forall\,t\in[-1,1].$$
Letting $\alpha\colonequals\frac{n}{2}-1$, we have
\begin{equation}\label{one1}
(1-t^{2})^{\frac{n}{2}-\frac{3}{2}}C_{d}^{(\frac{n}{2}-1)}(t)=\frac{(-2)^{d}\Big(\frac{n}{2}-1\Big)_{d}}{d!(d+n-2)_{d}}\frac{\d^{d}}{\d t^{d}}(1-t^{2})^{\frac{n}{2}+d-\frac{3}{2}},\qquad\forall\,t\in[-1,1].
\end{equation}
From \cite[p. 302]{andrews99},
\begin{equation}\label{one2}
C_{d}^{(\frac{n}{2}-1)}(1)=\frac{(n-2)_{d}}{d!}.
\end{equation}
Then \cite[Corollary 4.6]{hwang21} defines
\begin{equation}\label{lamdef}
\begin{aligned}
\lambda_{d,\sdimn}^{r,s}
&\colonequals\frac{\int_{-1}^{1}\frac{C_{d}^{(\frac{n}{2}-1)}(t)}{C_{d}^{(\frac{n}{2}-1)}(1)}(1-t^{2})^{\frac{n}{2}-\frac{3}{2}} g(t)\d t}{\int_{-1}^{1}(1-t^{2})^{\frac{n}{2}-\frac{3}{2}}\d t}
\stackrel{\eqref{one0p}}{=}\frac{\int_{-1}^{1}\frac{C_{d}^{(\frac{n}{2}-1)}(t)}{C_{d}^{(\frac{n}{2}-1)}(1)}(1-t^{2})^{\frac{n}{2}-\frac{3}{2}} e^{\frac{\rho rst}{1-\rho^{2}}}\d t}
{\int_{-1}^{1}(1-t^{2})^{\frac{n}{2}-\frac{3}{2}}e^{\frac{\rho rst}{1-\rho^{2}}}\d t}\\
&\stackrel{\eqref{one1}\wedge\eqref{one2}}{=}
\frac{(-2)^{d}(\frac{n}{2}-1)_{d}}{d!(d+n-2)_{d}}\frac{d!}{(n-2)_{d}}
\frac{\int_{-1}^{1} \Big[\frac{\d^{d}}{\d t^{d}}(1-t^{2})^{\frac{n}{2}+d-\frac{3}{2}}\Big]e^{\frac{\rho rst}{1-\rho^{2}}}\d t}
{\int_{-1}^{1}(1-t^{2})^{\frac{n}{2}-\frac{3}{2}}e^{\frac{\rho rst}{1-\rho^{2}}}\d t}.
\end{aligned}
\end{equation}
Integrating by parts $d$ times,
\begin{flalign*}
\lambda_{d,\sdimn}^{r,s}
&=\Big(\frac{\rho rs}{1-\rho^{2}}\Big)^{d}
\frac{(-2)^{d}(\frac{n}{2}-1)_{d}}{(n-2)_{2d}}
\frac{(-1)^{d}\int_{-1}^{1} (1-t^{2})^{\frac{n}{2}+d-\frac{3}{2}}e^{\frac{\rho rst}{1-\rho^{2}}}\d t}
{\int_{-1}^{1}(1-t^{2})^{\frac{n}{2}-\frac{3}{2}}e^{\frac{\rho rst}{1-\rho^{2}}}\d t}.
\end{flalign*}

In the case $d=1$ we have
\begin{equation}\label{one6}
\begin{aligned}
\lambda_{1,\sdimn}^{r,s}
&=\Big(\frac{\rho rs}{1-\rho^{2}}\Big)
\frac{(2)(\frac{n}{2}-1)}{(n-1)}\frac{1}{(n-2)}
\frac{\int_{-1}^{1} (1-t^{2})^{\frac{n}{2}-\frac{1}{2}}e^{\frac{\rho rst}{1-\rho^{2}}}\d t}
{\int_{-1}^{1}(1-t^{2})^{\frac{n}{2}-\frac{3}{2}}e^{\frac{\rho rst}{1-\rho^{2}}}\d t}\\
&=\Big(\frac{\rho rs}{1-\rho^{2}}\Big)\Big(\frac{1}{n-1}\Big)
\frac{\int_{-1}^{1} (1-t^{2})^{\frac{n}{2}-\frac{1}{2}}e^{\frac{\rho rst}{1-\rho^{2}}}\d t}
{\int_{-1}^{1}(1-t^{2})^{\frac{n}{2}-\frac{3}{2}}e^{\frac{\rho rst}{1-\rho^{2}}}\d t}.
\end{aligned}
\end{equation}

The following Lemma follows from the Cauchy-Schwarz inequality, and from the spherical harmonic decomposition \eqref{nine7}.

\begin{lemma}[{\cite[Lemma 5.4]{hwang21}}]\label{lemma1}
Let $r,s>0$.  Let $\rho>0$.  Let $f_{r},f_{s}\colon S^{\sdimn-1}\to S^{\sdimn-1}$.  Then
$$\E_{(U,V)\sim N_{\rho}^{r,s}}\langle f_{r}(U),f_{s}(V)\rangle\leq\langle \E f_{r},\E f_{s}\rangle+\lambda_{1,\sdimn}^{r,s}\sqrt{\E\vnorm{f_{r}-\E f_{r}}^{2}}\sqrt{\E\vnorm{f_{s}-\E f_{s}}^{2}}.$$
Equality holds only when $f_{r}(x)=f_{s}(x)=Mx$ for all $x\in S^{\sdimn-1}$, where $M$ is an $n\times n$ real orthogonal matrix.
$$\E_{(U,V)\sim N_{\rho}^{r,s}}\langle f_{r}(U),f_{s}(V)\rangle\geq\langle \E f_{r},\E f_{s}\rangle-\lambda_{1,\sdimn}^{r,s}\sqrt{\E\vnorm{f_{r}-\E f_{r}}^{2}}\sqrt{\E\vnorm{f_{s}-\E f_{s}}^{2}}.$$
Equality holds only when $f_{r}(x)=-f_{s}(x)=Mx$ for all $x\in S^{\sdimn-1}$, where $M$ is an $n\times n$ real orthogonal matrix.
\end{lemma}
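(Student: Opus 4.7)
The plan is to reduce the bilinear expectation to an integral against the smoothing operator $U_g$, expand in vector-valued spherical harmonics via Funk--Hecke, and then apply Cauchy--Schwarz twice. First I would rewrite
$$\E_{(U,V)\sim N_\rho^{r,s}}\langle f_r(U),f_s(V)\rangle = \int_{S^{n-1}}\bigl\langle f_r(u), U_g f_s(u)\bigr\rangle\,\d\sigma(u),$$
using Definition \ref{uvdef} together with the normalization \eqref{one0p}, where $g = g_{\rho,r,s}$. Next I would apply the vector-valued analogue of the spherical harmonic decomposition \eqref{nine7} componentwise to obtain
$$\int_{S^{n-1}}\langle f_r, U_g f_s\rangle\,\d\sigma = \langle\E f_r,\E f_s\rangle + \sum_{d=1}^{\infty}\lambda_{d,n}^{r,s}\bigl\langle \mathrm{Proj}_d(f_r),\mathrm{Proj}_d(f_s)\bigr\rangle_{L_2(\d\sigma)}.$$

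A double Cauchy--Schwarz then yields the bound. For each $d\geq 1$, Cauchy--Schwarz in $L_2(\d\sigma)$ gives $\langle\mathrm{Proj}_d(f_r),\mathrm{Proj}_d(f_s)\rangle\leq \vnorm{\mathrm{Proj}_d(f_r)}_2 \vnorm{\mathrm{Proj}_d(f_s)}_2$, and a second Cauchy--Schwarz across the index $d$ produces
$$\sum_{d\geq 1}\lambda_{d,n}^{r,s}\vnorm{\mathrm{Proj}_d(f_r)}_2 \vnorm{\mathrm{Proj}_d(f_s)}_2 \leq \lambda_{1,n}^{r,s}\sqrt{\sum_{d\geq 1}\vnorm{\mathrm{Proj}_d(f_r)}_2^{2}}\sqrt{\sum_{d\geq 1}\vnorm{\mathrm{Proj}_d(f_s)}_2^{2}},$$
after which Parseval identifies the two square roots with $\sqrt{\E\vnorm{f_r-\E f_r}^{2}}$ and $\sqrt{\E\vnorm{f_s-\E f_s}^{2}}$. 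The step I expect to be the main obstacle is verifying $0<\lambda_{d,n}^{r,s}\leq\lambda_{1,n}^{r,s}$ for every $d\geq 1$ when $\rho>0$. From the explicit integral expression for $\lambda_{d,n}^{r,s}$ derived just before \eqref{one6}, positivity is immediate; monotonicity in $d$ should follow by comparing the consecutive ratios of Pochhammer factors $(n-2)_{2d+2}/(n-2)_{2d} = (n-2+2d)(n-1+2d)$ against the ratio of the $(1-t^{2})$-weighted exponential integrals (a log-concavity computation), or by recognizing the sequence as the eigenvalue list of a reversible Markov operator on the sphere whose spectral gap equals $\lambda_{1,n}^{r,s}$.

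Finally, I would unpack the equality case. Equality in both Cauchy--Schwarz applications, together with strict inequality $\lambda_{d,n}^{r,s}<\lambda_{1,n}^{r,s}$ for $d\geq 2$, forces $\mathrm{Proj}_d(f_r)=\mathrm{Proj}_d(f_s)=0$ for $d\geq 2$ and $\mathrm{Proj}_1(f_r)$ proportional to $\mathrm{Proj}_1(f_s)$ with a nonnegative scalar. Since degree one spherical harmonics on $S^{n-1}$ are linear, $\mathrm{Proj}_1(f_r)(x)=A_r x$ for an $n\times n$ matrix $A_r$, whence $f_r(x)=\E f_r + A_r x$. The constraint $\vnorm{f_r(x)}=1$ for all $x\in S^{n-1}$ then forces $\E f_r = 0$ and $A_r$ orthogonal, and nonnegative proportionality between two orthogonal matrices forces $A_r=A_s$; this yields $f_r(x)=f_s(x)=Mx$ with $M$ orthogonal. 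The lower bound and its equality condition follow by applying the same argument to the pair $(f_r,-f_s)$, which produces $f_r(x)=-f_s(x)=Mx$.
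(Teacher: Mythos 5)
The paper cites Lemma \ref{lemma1} verbatim from \cite[Lemma 5.4]{hwang21} and gives no internal proof, so there is nothing in this document to compare against. That said, the outline you give—rewriting the bilinear expectation via $U_g$ from Definition \ref{uvdef}, expanding in vector-valued spherical harmonics as in \eqref{nine7}, and applying two Cauchy--Schwarz inequalities—is the natural route and does close, but only once the eigenvalue monotonicity you flag is actually proved, and neither of your two proposed justifications for it is yet a proof.

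The heuristic that $(\lambda_{d,n}^{r,s})_{d\geq 1}$ decreases because it is ``the eigenvalue list of a reversible Markov operator whose spectral gap equals $\lambda_1$'' is not sufficient: a positivity-preserving, self-adjoint Markov kernel on the sphere has all eigenvalues in $[-1,1]$, but nothing forces them to be monotone in the degree index (or even positive). What saves the argument is the specific form of the kernel, $g(t)\propto e^{at}$ with $a=\rho rs/(1-\rho^2)>0$. Carrying out at general $d$ the Pochhammer/Gamma simplification that the paper performs for $d=1$ in \eqref{lameq} yields the closed form
$$\lambda_{d,n}^{r,s}=\frac{I_{\frac{n}{2}-1+d}(a)}{I_{\frac{n}{2}-1}(a)},\qquad d\geq 0,$$
since $\frac{2^{d}(\frac{n}{2}-1)_{d}}{(n-2)_{2d}}=\prod_{j=0}^{d-1}(n-1+2j)^{-1}$ exactly cancels the Gamma-ratio appearing when the two integrals are written as modified Bessel functions. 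Positivity is then immediate, and the Amos bound \eqref{two2} gives $\lambda_{d+1,n}^{r,s}/\lambda_{d,n}^{r,s}=I_{\frac{n}{2}+d}(a)/I_{\frac{n}{2}-1+d}(a)<1$ strictly for every $d\geq0$ (using $a>0$ and the order $\frac{n}{2}-1+d>0$). You should insert this computation; it also supplies the strict gap $\lambda_{d,n}^{r,s}<\lambda_{1,n}^{r,s}$ for $d\geq2$ that your equality analysis quietly uses.

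One further small gap in the equality case: if, say, $f_r$ is constant on $S^{\sdimn-1}$, then $\E\vnormf{f_r-\E f_r}^2=0$, both sides of the claimed inequality collapse to $\langle\E f_r,\E f_s\rangle$, and equality holds for arbitrary $f_s$ with no orthogonal $M$ in sight. Your proportionality step tacitly assumes both degree-one projections are nonzero; you should either exclude the constant case by hypothesis or note that it is harmless for the application in \eqref{four0}--\eqref{four4}, where one has already shown $\E f_{\vnorm{x}}=0$ a.e., which rules it out.
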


\subsection{Expected Value Notation}\label{enote}

\begin{itemize}
\item $\E$ with no subscript denotes expected value on a sphere with respect to the uniform (Haar) probability measure.
\item $\E_{(U,V)\sim N_{\rho}^{r,s}}$ denotes expected value with respect to $(U,V)$ from Definition \ref{uvdef}.
\item $\underset{X\sim_{\rho}Y}{\E}$ denotes expected value with respect to $(X,Y)$ from Definition \ref{gausdef}.
\item $\E_{R,S}$ denotes expected value with respect to $R,S$ where $R=\vnorm{X},S=\vnorm{Y}$, and $X,Y$ are two standard $\rho$-correlated Gaussians, as in Definition \ref{gausdef}.
\item $\E_{\gamma}$ denotes expected value with respect to the Gaussian density $\gamma_{n}$.
\end{itemize}

\section{Preliminaries: Quadratic Case}

Our first step towards proving Theorem \ref{thm1} with $\rho>0$ will be modifying Lemma \ref{lemma1}.  Lemmas \ref{lemma2} and \ref{lemma3} below demonstrate that optimizing the noise stability $\underset{X\sim_{\rho}Y}{\E}\langle f(X),f(Y)\rangle$ involves an interplay between $\E f_{\vnorm{X}}$ having norm $0$ or norm $1$.

\begin{lemma}\label{lemma2}
Let $f\colon \R^{\sdimn}\to S^{\sdimn-1}$ be continuous.  For any $r>0$, denote $f_{r}\colonequals f|_{rS^{\sdimn-1}}$ and denote $\E f_{r}$ as the expected value of $f_{r}$ on $rS^{\sdimn-1}$ with respect to the uniform probability (Haar) measure on $S^{\sdimn-1}$.  Then

$$\underset{X\sim_{\rho}Y}{\E}\langle f(X),f(Y)\rangle
\leq\underset{X\sim_{\rho}Y}{\E}\Big(\langle \E f_{\vnorm{X}}, \E f_{\vnorm{Y}}\rangle+\lambda_{1,\sdimn}^{\vnorm{X},\vnorm{Y}}\sqrt{1-\vnorm{\E f_{\vnorm{X}}}^{2}}\sqrt{1-\vnorm{\E f_{\vnorm{Y}}}^{2}}\Big).$$
Equality holds only when $f(x)=Mx/\vnorm{Mx}$ for a.e. $x\in S^{\sdimn-1}$, where $M$ is an $n\times n$ real orthogonal matrix.
\end{lemma}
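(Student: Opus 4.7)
\smallskip
\noindent\textbf{Proof proposal.} The plan is to condition on the radial variables and reduce to the spherical situation already handled by Lemma \ref{lemma1}. Writing $R\colonequals\vnorm{X}$ and $S\colonequals\vnorm{Y}$, the Gaussian joint density $G_{\rho}(x,y)$ of Definition \ref{gausdef} depends on $(x,y)$ only through $\vnorm{x}$, $\vnorm{y}$ and $\langle x,y\rangle$. Writing $x=ru$, $y=sv$ with $u,v\in S^{\sdimn-1}$, the part of the exponent depending on the angular variables is exactly $\frac{\rho rs\langle u,v\rangle}{1-\rho^{2}}$. So conditional on $R=r$, $S=s$, the pair $(X/r,Y/s)$ has precisely the joint distribution $N_{\rho}^{r,s}$ of Definition \ref{uvdef}. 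By the tower property of conditional expectation,
$$\underset{X\sim_{\rho}Y}{\E}\langle f(X),f(Y)\rangle=\E_{R,S}\,\E_{(U,V)\sim N_{\rho}^{R,S}}\langle f_{R}(RU),f_{S}(SV)\rangle.$$

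Next I would identify $f_{r}\colon rS^{\sdimn-1}\to S^{\sdimn-1}$ with the sphere-to-sphere function $u\mapsto f(ru)$, noting that the Haar measure on $rS^{\sdimn-1}$ is the pushforward of $\d\sigma$ under scaling by $r$, so both definitions of $\E f_{r}$ agree. For each fixed $r,s>0$, Lemma \ref{lemma1} applied to the pair $(u\mapsto f(ru), u\mapsto f(su))$ with $\rho>0$ yields
$$\E_{(U,V)\sim N_{\rho}^{r,s}}\langle f_{r}(U),f_{s}(V)\rangle\leq\langle \E f_{r},\E f_{s}\rangle+\lambda_{1,\sdimn}^{r,s}\sqrt{\E\vnorm{f_{r}-\E f_{r}}^{2}}\sqrt{\E\vnorm{f_{s}-\E f_{s}}^{2}}.$$
Since $f_{r}$ takes values in $S^{\sdimn-1}$, $\vnorm{f_{r}(u)}^{2}=1$ pointwise, and expanding gives $\E\vnorm{f_{r}-\E f_{r}}^{2}=1-\vnorm{\E f_{r}}^{2}$, and similarly for $s$. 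Substituting these identities into the pointwise bound and then integrating the inequality over the joint distribution of $(R,S)$ yields the stated inequality.

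The equality case would follow from the equality case of Lemma \ref{lemma1}: on the set of $(r,s)$ with positive probability, equality requires $f_{r}(u)=f_{s}(u)=M_{r,s}u$ for some orthogonal $M_{r,s}$; the fact that the left-hand side in the orthogonal decomposition depends only on $r$ (resp.\ only on $s$) forces $M_{r,s}$ to be independent of $r$ and $s$, leaving a single orthogonal $M$ with $f(ru)=Mu=Mx/\vnorm{x}=Mx/\vnorm{Mx}$ for $x=ru$, a.e. The only real obstacle is the conditional-distribution identification in the first paragraph, but this is a direct computation from the explicit formula \eqref{gdef} together with the polar decomposition of Lebesgue measure, so I expect no substantive difficulty; the remaining steps are a routine application of Lemma \ref{lemma1} and Fubini.
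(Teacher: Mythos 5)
Your proof is correct and follows essentially the same route as the paper's: write the Gaussian noise stability as a mixture over the radial variables $(R,S)$ of the spherical noise stability with distribution $N_{\rho}^{R,S}$, apply Lemma~\ref{lemma1} pointwise in $(r,s)$, use $\E\vnorm{f_{r}-\E f_{r}}^{2}=1-\vnorm{\E f_{r}}^{2}$ from $\vnorm{f_r}\equiv 1$, and average; the equality case is passed through from Lemma~\ref{lemma1}. The only difference is presentational: you make explicit the verification that the conditional law of $(X/\vnorm{X},Y/\vnorm{Y})$ given $(\vnorm{X},\vnorm{Y})=(r,s)$ is $N_{\rho}^{r,s}$ and that the equality matrix $M_{r,s}$ is forced to be constant, both of which the paper leaves implicit.
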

\begin{proof}
We first write (recalling the notation of Section \ref{enote})
$$\E_{X\sim_{\rho}Y}\langle f(X),f(Y)\rangle
=\E_{R,S}\E_{(U,V)\sim N_{\rho}^{R,S}}\langle f_{R}(U),f_{S}(V)\rangle.$$
Applying Lemma \ref{lemma1} and averaging over $R,S$, and also using that $f$ takes values in $S^{\sdimn-1}$, so $\E\vnorm{f_{S}-\E f_{S}}^{2}=1+\vnorm{\E f_{S}}^{2}-2\langle\E f_{S},\E f_{S}\rangle=1-\vnorm{\E f_{S}}^{2}$,
\begin{flalign*}
&\underset{R,S}{\E}\underset{(U,V)\sim N_{\rho}^{R,S}}{\E}\langle f_{R}(U),f_{S}(V)\rangle\\
&\qquad\leq\underset{R,S}{\E}\langle \E f_{R},\E f_{S}\rangle+\underset{R,S}{\E}\lambda_{1,\sdimn}^{R,S}\sqrt{\E\vnorm{f_{R}-\E f_{R}}^{2}}\sqrt{\E\vnorm{f_{S}-\E f_{S}}^{2}}\\
&\qquad=\underset{R,S}{\E}\langle \E f_{R},\E f_{S}\rangle+\underset{R,S}{\E}\lambda_{1,\sdimn}^{R,S}\sqrt{\E(1-\vnorm{\E f_{R}}^{2})}\sqrt{\E(1-\vnorm{\E f_{S}}^{2})}\\
&\qquad=\underset{X\sim_{\rho}Y}{\E}\Big(\langle \E f_{\vnorm{X}}, \E f_{\vnorm{Y}}\rangle+\lambda_{1,\sdimn}^{\vnorm{X},\vnorm{Y}}\sqrt{1-\vnorm{\E f_{\vnorm{X}}}^{2}}\sqrt{1-\vnorm{\E f_{\vnorm{Y}}}^{2}}\Big).
\end{flalign*}
The equality case follows from the equality case of Lemma \ref{lemma1}
\end{proof}

Using again the notation of Lemma \ref{lemma2}, we have
\begin{lemma}\label{lemma3}
Let $f\colon \R^{\sdimn}\to S^{\sdimn-1}$.  Then
$$
\underset{X\sim_{\rho}Y}{\E}\langle f(X),f(Y)\rangle
\leq\underset{X\sim_{\rho}Y}{\E}\Big(\lambda_{1,\sdimn}^{\vnorm{X},\vnorm{Y}}
+\langle \E f_{\vnorm{X}}, \E f_{\vnorm{Y}}\rangle
-\vnorm{\E f_{\vnorm{X}}}^{2}\lambda_{1,\sdimn}^{\vnorm{X},\vnorm{Y}}\Big).
$$
Equality holds only when $f(x)=Mx/\vnorm{Mx}$ for a.e. $x\in S^{\sdimn-1}$, where $M$ is an $n\times n$ real orthogonal matrix.
\end{lemma}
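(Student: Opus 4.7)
The plan is to obtain Lemma \ref{lemma3} as an immediate consequence of Lemma \ref{lemma2} via an AM--GM step, using the fact that $\lambda_{1,\sdimn}^{r,s}\geq 0$ when $\rho>0$, followed by a symmetrization argument exploiting the exchangeability of $(X,Y)$.

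First, invoke Lemma \ref{lemma2} to get
\begin{equation*}
\underset{X\sim_{\rho}Y}{\E}\langle f(X),f(Y)\rangle
\leq\underset{X\sim_{\rho}Y}{\E}\langle \E f_{\vnorm{X}}, \E f_{\vnorm{Y}}\rangle
+\underset{X\sim_{\rho}Y}{\E}\lambda_{1,\sdimn}^{\vnorm{X},\vnorm{Y}}\sqrt{1-\vnorm{\E f_{\vnorm{X}}}^{2}}\sqrt{1-\vnorm{\E f_{\vnorm{Y}}}^{2}}.
\end{equation*}
Since $\rho>0$ and $r,s>0$, both integrals appearing in formula \eqref{one6} are strictly positive, so $\lambda_{1,\sdimn}^{\vnorm{X},\vnorm{Y}}\geq 0$ almost surely. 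This nonnegativity is essential, because it lets me apply the pointwise AM--GM bound $\sqrt{ab}\leq (a+b)/2$ with $a=1-\vnorm{\E f_{\vnorm{X}}}^{2}$, $b=1-\vnorm{\E f_{\vnorm{Y}}}^{2}$ without reversing the direction of the inequality, yielding
\begin{equation*}
\lambda_{1,\sdimn}^{\vnorm{X},\vnorm{Y}}\sqrt{1-\vnorm{\E f_{\vnorm{X}}}^{2}}\sqrt{1-\vnorm{\E f_{\vnorm{Y}}}^{2}}
\leq \lambda_{1,\sdimn}^{\vnorm{X},\vnorm{Y}}\Big(1-\tfrac{1}{2}\vnorm{\E f_{\vnorm{X}}}^{2}-\tfrac{1}{2}\vnorm{\E f_{\vnorm{Y}}}^{2}\Big).
\end{equation*}

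Next I symmetrize. The joint distribution of $(X,Y)$ with density $G_{\rho}$ is invariant under swapping the two coordinates, and $\lambda_{1,\sdimn}^{r,s}=\lambda_{1,\sdimn}^{s,r}$, so
\begin{equation*}
\underset{X\sim_{\rho}Y}{\E}\lambda_{1,\sdimn}^{\vnorm{X},\vnorm{Y}}\vnorm{\E f_{\vnorm{X}}}^{2}
=\underset{X\sim_{\rho}Y}{\E}\lambda_{1,\sdimn}^{\vnorm{X},\vnorm{Y}}\vnorm{\E f_{\vnorm{Y}}}^{2}.
\end{equation*}
Taking the expectation of the previous display and using this identity gives exactly the stated upper bound
$\underset{X\sim_{\rho}Y}{\E}\bigl(\lambda_{1,\sdimn}^{\vnorm{X},\vnorm{Y}}+\langle \E f_{\vnorm{X}},\E f_{\vnorm{Y}}\rangle-\vnorm{\E f_{\vnorm{X}}}^{2}\lambda_{1,\sdimn}^{\vnorm{X},\vnorm{Y}}\bigr)$.

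For the equality clause, if $f(x)=Mx/\vnorm{Mx}$ for an orthogonal $M$, then $f_{r}$ is an odd function on $rS^{\sdimn-1}$ for every $r>0$, so $\E f_{r}=0$; hence both of the intermediate inequalities collapse to equalities, and conversely equality in Lemma \ref{lemma2} already forces this form of $f$. There is essentially no obstacle here---the lemma is a one-line reformulation of Lemma \ref{lemma2}---so the only thing to be careful about is the sign of $\lambda_{1,\sdimn}^{\vnorm{X},\vnorm{Y}}$ (which is why the statement is implicitly restricted to $\rho>0$) and the symmetry of the joint law used in the symmetrization step.
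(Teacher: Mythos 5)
Your proof is correct and follows essentially the same route as the paper: apply Lemma \ref{lemma2}, then bound the cross term by AM--GM and symmetrize over the exchangeable pair $(X,Y)$. You are more explicit than the paper on two points that the paper leaves implicit---the nonnegativity of $\lambda_{1,\sdimn}^{r,s}$ needed so the AM--GM step does not flip, and the use of exchangeability plus $\lambda_{1,\sdimn}^{r,s}=\lambda_{1,\sdimn}^{s,r}$ to pass from the symmetric average $\tfrac12(\vnorm{\E f_{\vnorm{X}}}^{2}+\vnorm{\E f_{\vnorm{Y}}}^{2})$ to the single term $\vnorm{\E f_{\vnorm{X}}}^{2}$---which is a welcome clarification rather than a different argument.
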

\begin{proof}
Applying the inequality $\abs{ab}\leq (1/2)(a^{2}+b^{2})$ $\forall$ $a,b\in\R$, we have
\begin{flalign*}
&\underset{X\sim_{\rho}Y}{\E}\lambda_{1,\sdimn}^{\vnorm{X},\vnorm{Y}}\sqrt{1-\vnorm{\E f_{\vnorm{X}}}^{2}}\sqrt{1-\vnorm{\E f_{\vnorm{Y}}}^{2}}\\
&\qquad\qquad\leq\underset{X\sim_{\rho}Y}{\E}\lambda_{1,\sdimn}^{\vnorm{X},\vnorm{Y}}(1-\vnorm{\E f_{\vnorm{X}}}^{2})\\
&\qquad\qquad=\underset{X\sim_{\rho}Y}{\E}\lambda_{1,\sdimn}^{\vnorm{X},\vnorm{Y}}-\underset{X\sim_{\rho}Y}{\E}\lambda_{1,\sdimn}^{\vnorm{X},\vnorm{Y}}\vnorm{\E f_{\vnorm{X}}}^{2}.
\end{flalign*}
Combining with Lemma \ref{lemma2} concludes the proof.
\end{proof}

\section{Preliminaries: Bilinear Case}

As in the previous section, our first step towards proving Theorem \ref{thm1} with $\rho<0$ will be modifying Lemma \ref{lemma1}.  Lemmas \ref{lemma6} and \ref{lemma7} below demonstrate that optimizing the bilinear noise stability $\underset{X\sim_{\rho}Y}{\E}\langle f(X),g(Y)\rangle$ involves an interplay between $\E f_{\vnorm{X}}$ having norm $0$ or norm $1$, and similarly for $\E g_{\vnorm{Y}}$.

\begin{lemma}\label{lemma6}
Let $f,g\colon \R^{\sdimn}\to S^{\sdimn-1}$ be continuous.  For any $r>0$, denote $f_{r}\colonequals f|_{rS^{\sdimn-1}}$ and denote $\E f_{r}$ as the expected value of $f_{r}$ on $rS^{\sdimn-1}$ with respect to the uniform probability (Haar) measure on $S^{\sdimn-1}$.  Then

$$\underset{X\sim_{\rho}Y}{\E}\langle f(X), g(Y)\rangle
\geq\underset{X\sim_{\rho}Y}{\E}\Big(\langle \E f_{\vnorm{X}}, \E g_{\vnorm{Y}}\rangle-\lambda_{1,\sdimn}^{\vnorm{X},\vnorm{Y}}\sqrt{1-\vnorm{\E f_{\vnorm{X}}}^{2}}\sqrt{1-\vnorm{\E g_{\vnorm{Y}}}^{2}}\Big).$$
Equality holds only when $f(x)=-g(x)=Mx/\vnorm{Mx}$ for a.e. $x\in S^{\sdimn-1}$, where $M$ is an $n\times n$ real orthogonal matrix.
\end{lemma}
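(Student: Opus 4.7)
The plan is to mimic the proof of Lemma \ref{lemma2} nearly verbatim, with two modifications: apply the lower bound half of Lemma \ref{lemma1} in place of its upper bound, and replace the second copy of $f$ by $g$ throughout.

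First I would condition on the two radii $R = \vnorm{X}$ and $S = \vnorm{Y}$. The conditional distribution of $(X/R, Y/S)$ given $(R, S)$ is exactly $N_{\rho}^{R,S}$ from Definition \ref{uvdef} (this is the same radial/angular decomposition used in the proof of Lemma \ref{lemma2}), so by the tower property
\[
\underset{X \sim_{\rho} Y}{\E} \langle f(X), g(Y) \rangle
= \underset{R,S}{\E}\, \underset{(U,V)\sim N_{\rho}^{R,S}}{\E} \langle f_R(U), g_S(V) \rangle.
\]

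Next I would apply the lower bound half of Lemma \ref{lemma1} pointwise in $(R, S)$, taking the two sphere-valued functions there to be $f_R$ and $g_S$ (the statement of Lemma \ref{lemma1} is formulated for two possibly distinct functions $f_r, f_s\colon S^{n-1} \to S^{n-1}$, so this is permitted):
\[
\underset{(U,V)\sim N_{\rho}^{R,S}}{\E} \langle f_R(U), g_S(V) \rangle
\geq \langle \E f_R, \E g_S \rangle - \lambda_{1,n}^{R,S} \sqrt{\E \vnorm{f_R - \E f_R}^{2}} \sqrt{\E \vnorm{g_S - \E g_S}^{2}}.
\]
Since $f, g$ are $S^{n-1}$-valued, expanding the square gives $\E \vnorm{f_R - \E f_R}^{2} = 1 - \vnorm{\E f_R}^{2}$, and likewise $\E \vnorm{g_S - \E g_S}^{2} = 1 - \vnorm{\E g_S}^{2}$. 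Averaging this inequality over $(R, S)$ and repackaging the result back as an expectation over $X \sim_{\rho} Y$ yields precisely the claimed bound.

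For the equality case, the equality statement in Lemma \ref{lemma1} (lower bound version) forces, for a.e. $(R, S)$ in the support of the joint law of $(\vnorm{X}, \vnorm{Y})$, the existence of an orthogonal matrix $M_{R, S}$ with $f_R(x) = -g_S(x) = M_{R, S}\, x$ for $x \in S^{n-1}$. Continuity of $f$ and $g$ then forces $M_{R, S}$ to be independent of $(R, S)$, and this pulls back to $f(x) = -g(x) = Mx / \vnorm{Mx}$ for a.e. $x \in \R^{n}$.

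I do not foresee a serious obstacle: the argument is a direct bilinear transcription of the proof of Lemma \ref{lemma2}, with all the inequality directions flipped and one copy of $f$ relabeled $g$. The only delicate point is patching the orthogonal matrix across radii in the equality analysis, which is handled by the continuity hypothesis on $f$ and $g$.
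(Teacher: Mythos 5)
Your proposal is correct and follows essentially the same route as the paper: condition on the radii $R=\vnorm{X}$, $S=\vnorm{Y}$ to reduce to the spherical noise stability, apply the lower-bound half of Lemma \ref{lemma1} with the two functions $f_R$ and $g_S$, use $\E\vnorm{f_R-\E f_R}^2 = 1-\vnorm{\E f_R}^2$ (and likewise for $g_S$), and average over $(R,S)$. The only difference is that you spell out the equality analysis (patching the orthogonal matrix across radii) in more detail than the paper does; in fact one can also see the constancy of $M_{R,S}$ directly, since $f_R(x)=M_{R,S}x$ for every $S$ forces $M_{R,S}$ to depend only on $R$, and $-g_S(x)=M_{R,S}x$ for every $R$ forces it to depend only on $S$.
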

\begin{proof}
We first write (recalling the notation of Section \ref{enote})
$$\E_{X\sim_{\rho}Y}\langle f(X),g(Y)\rangle
=\E_{R,S}\E_{(U,V)\sim N_{\rho}^{R,S}}\langle f_{R}(U),g_{S}(V)\rangle.$$
Applying Lemma \ref{lemma1} and averaging over $R,S$, and also using that $f$ takes values in $S^{\sdimn-1}$, so $\E\vnorm{f_{S}-\E f_{S}}^{2}=1+\vnorm{\E f_{S}}^{2}-2\langle\E f_{S},\E f_{S}\rangle=1-\vnorm{\E f_{S}}^{2}$, and similarly for $g$,
\begin{flalign*}
&\underset{R,S}{\E}\underset{(U,V)\sim N_{\rho}^{R,S}}{\E}\langle f_{R}(U),g_{S}(V)\rangle\\
&\qquad\geq\underset{R,S}{\E}\langle \E f_{R},\E g_{S}\rangle-\underset{R,S}{\E}\lambda_{1,\sdimn}^{R,S}\sqrt{\E\vnorm{f_{R}-\E f_{R}}^{2}}\sqrt{\E\vnorm{g_{S}-\E g_{S}}^{2}}\\
&\qquad=\underset{R,S}{\E}\langle \E f_{R},\E g_{S}\rangle-\underset{R,S}{\E}\lambda_{1,\sdimn}^{R,S}\sqrt{\E(1-\vnorm{\E f_{R}}^{2})}\sqrt{\E(1-\vnorm{\E g_{S}}^{2})}\\
&\qquad=\underset{X\sim_{\rho}Y}{\E}\Big(\langle \E f_{\vnorm{X}}, \E g_{\vnorm{Y}}\rangle+\lambda_{1,\sdimn}^{\vnorm{X},\vnorm{Y}}\sqrt{1-\vnorm{\E f_{\vnorm{X}}}^{2}}\sqrt{1-\vnorm{\E g_{\vnorm{Y}}}^{2}}\Big).
\end{flalign*}
The equality case follows from the equality case of Lemma \ref{lemma1}.
\end{proof}

Using again the notation of Lemma \ref{lemma6}.  Then
\begin{lemma}\label{lemma7}
Let $f,g\colon \R^{\sdimn}\to S^{\sdimn-1}$.  Then
$$
\underset{X\sim_{\rho}Y}{\E}\langle f(X),g(Y)\rangle
\geq\underset{X\sim_{\rho}Y}{\E}\Big(-\lambda_{1,\sdimn}^{\vnorm{X},\vnorm{Y}}
+\langle \E f_{\vnorm{X}}, \E g_{\vnorm{Y}}\rangle
+\frac{1}{2}(\vnorm{\E f_{\vnorm{X}}}^{2}+\vnorm{\E g_{\vnorm{X}}}^{2})\lambda_{1,\sdimn}^{\vnorm{X},\vnorm{Y}}\Big).
$$
Equality holds only when $f(x)=-g(x)=Mx/\vnorm{Mx}$ for a.e. $x\in S^{\sdimn-1}$, where $M$ is an $n\times n$ real orthogonal matrix.
\end{lemma}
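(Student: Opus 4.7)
The plan is to derive Lemma~\ref{lemma7} from Lemma~\ref{lemma6} by the same AM--GM maneuver that took us from Lemma~\ref{lemma2} to Lemma~\ref{lemma3}. Lemma~\ref{lemma6} already provides the lower bound $\underset{X\sim_{\rho}Y}{\E}\langle f(X),g(Y)\rangle \geq \underset{X\sim_{\rho}Y}{\E}\big(\langle \E f_{\vnorm{X}},\E g_{\vnorm{Y}}\rangle - \lambda_{1,\sdimn}^{\vnorm{X},\vnorm{Y}}\sqrt{1-\vnorm{\E f_{\vnorm{X}}}^{2}}\sqrt{1-\vnorm{\E g_{\vnorm{Y}}}^{2}}\big)$, so the only remaining task is to replace the geometric-mean factor $\sqrt{1-\vnorm{\E f_{\vnorm{X}}}^{2}}\sqrt{1-\vnorm{\E g_{\vnorm{Y}}}^{2}}$ by the larger affine quantity $1-\tfrac{1}{2}(\vnorm{\E f_{\vnorm{X}}}^{2}+\vnorm{\E g_{\vnorm{Y}}}^{2})$.

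The first step is the elementary inequality $|\alpha\beta|\leq \tfrac{1}{2}(\alpha^{2}+\beta^{2})$ applied pointwise in $X,Y$ with $\alpha=\sqrt{1-\vnorm{\E f_{\vnorm{X}}}^{2}}$ and $\beta=\sqrt{1-\vnorm{\E g_{\vnorm{Y}}}^{2}}$. Since $\lambda_{1,\sdimn}^{\vnorm{X},\vnorm{Y}}$ is non-negative (by~\eqref{one6} its sign is that of $\rho$, and the underlying Lemma~\ref{lemma1} is stated for $\rho>0$), multiplying by $-\lambda_{1,\sdimn}^{\vnorm{X},\vnorm{Y}}$ reverses the inequality. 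Substituting the resulting pointwise bound into Lemma~\ref{lemma6} and taking the outer expectation produces the right-hand side of Lemma~\ref{lemma7}, except with $\vnorm{\E g_{\vnorm{Y}}}^{2}$ in place of $\vnorm{\E g_{\vnorm{X}}}^{2}$ in the final summand.

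The cosmetic gap is closed by a symmetry argument: the joint law of $(X,Y)$ under $\sim_{\rho}$ is invariant under swapping $X$ and $Y$, and $\lambda_{1,\sdimn}^{r,s}$ is symmetric in $r,s$, so $\underset{X\sim_{\rho}Y}{\E}\,\lambda_{1,\sdimn}^{\vnorm{X},\vnorm{Y}}\vnorm{\E g_{\vnorm{Y}}}^{2}=\underset{X\sim_{\rho}Y}{\E}\,\lambda_{1,\sdimn}^{\vnorm{X},\vnorm{Y}}\vnorm{\E g_{\vnorm{X}}}^{2}$, which converts the previous display into exactly the inequality asserted in the lemma. For the equality case, AM--GM is tight only when $\vnorm{\E f_{\vnorm{X}}}=\vnorm{\E g_{\vnorm{Y}}}$ almost surely, and combining this with the equality characterization in Lemma~\ref{lemma6} forces $f(x)=-g(x)=Mx/\vnorm{Mx}$ a.e.\ for some orthogonal $M$.

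There is no genuine obstacle: the lemma is the direct bilinear mirror of Lemma~\ref{lemma3}, and all of the real work is already packaged in Lemmas~\ref{lemma1} and~\ref{lemma6}. The purpose of this reformulation is to decouple the dependence on $\vnorm{\E f_{\vnorm{X}}}$ and $\vnorm{\E g_{\vnorm{Y}}}$ so that each appears linearly (rather than through a geometric mean) in the outer integrand, which is the form needed for the downstream bilinear estimates aimed at the $\rho<0$ half of Theorem~\ref{thm1}.
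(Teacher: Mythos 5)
Your proof is correct and matches the paper's argument: apply the AM--GM inequality $|ab|\le\frac12(a^2+b^2)$ to the geometric-mean factor from Lemma~\ref{lemma6}, multiply by $-\lambda_{1,\sdimn}^{\vnorm{X},\vnorm{Y}}\le 0$, and take expectations. You are in fact slightly more careful than the paper, which passes from $\vnorm{\E g_{\vnorm{Y}}}^{2}$ to $\vnorm{\E g_{\vnorm{X}}}^{2}$ without comment; your explicit appeal to the symmetry of the joint law of $(\vnorm{X},\vnorm{Y})$ and of $\lambda_{1,\sdimn}^{r,s}$ cleanly justifies that step.
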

\begin{proof}
Applying the inequality $\abs{ab}\leq (1/2)(a^{2}+b^{2})$ $\forall$ $a,b\in\R$, we have
\begin{flalign*}
&-\underset{X\sim_{\rho}Y}{\E}\lambda_{1,\sdimn}^{\vnorm{X},\vnorm{Y}}\sqrt{1-\vnorm{\E f_{\vnorm{X}}}^{2}}\sqrt{1-\vnorm{\E g_{\vnorm{Y}}}^{2}}\\
&\qquad\geq-\underset{X\sim_{\rho}Y}{\E}\lambda_{1,\sdimn}^{\vnorm{X},\vnorm{Y}}(1-[\vnorm{\E f_{\vnorm{X}}}^{2}+\vnorm{\E g_{\vnorm{X}}}^{2}]/2)\\
&\qquad=-\underset{X\sim_{\rho}Y}{\E}\lambda_{1,\sdimn}^{\vnorm{X},\vnorm{Y}}+\underset{X\sim_{\rho}Y}{\E}\lambda_{1,\sdimn}^{\vnorm{X},\vnorm{Y}}(\vnorm{\E f_{\vnorm{X}}}^{2}+\vnorm{\E g_{\vnorm{X}}}^{2})/2.
\end{flalign*}
Combining with Lemma \ref{lemma6} concludes the proof.
\end{proof}

\section{Eigenvalue Bounds}

In this section, we derive some bounds on the first eigenvalue $\lambda_{1,n}^{r,s}$ as defined in \eqref{nine7}.  From \eqref{one6}, we therefore need to control the function

\begin{equation}\label{two1}
\frac{\int_{-1}^{1} (1-t^{2})^{\frac{n}{2}-\frac{1}{2}}e^{\frac{\rho rst}{1-\rho^{2}}}\d t}
{\int_{-1}^{1}(1-t^{2})^{\frac{n}{2}-\frac{3}{2}}e^{\frac{\rho rst}{1-\rho^{2}}}\d t}.
\end{equation}

This is a ratio of modified Bessel functions of the first kind.  To recall their definition, first recall the Bessel function $J_{\alpha}$ of the first kind of order $\alpha\geq0$ is \cite[p. 204]{andrews99}
$$J_{\alpha}(x)\colonequals\frac{1}{\sqrt{\pi}\Gamma(\alpha+1/2)}(x/2)^{\alpha}\int_{-1}^{1}e^{ixt}(1-t^{2})^{\alpha-1/2}\d t,\qquad\forall\,x\in\R.$$
The modified Bessel function $I_{\alpha}$ of the first kind of order $\alpha\geq0$ is then \cite[p. 222]{andrews99}
\begin{flalign*}
I_{\alpha}(x)
\colonequals i^{-\alpha}J_{\alpha}(ix)
&=\frac{1}{\sqrt{\pi}\Gamma(\alpha+1/2)}(x/2)^{\alpha}\int_{-1}^{1}e^{-xt}(1-t^{2})^{\alpha-1/2}\d t\\
&=\frac{1}{\sqrt{\pi}\Gamma(\alpha+1/2)}(x/2)^{\alpha}\int_{-1}^{1}e^{xt}(1-t^{2})^{\alpha-1/2}\d t,\qquad\forall\,x\in\R.
\end{flalign*}
$$
\frac{I_{\alpha+1}(a)}{I_{\alpha}(a)}
=\frac{(a/2)}{\alpha+1/2}\frac{\int_{-1}^{1}e^{at}(1-t^{2})^{\alpha+1/2}\d t}{\int_{-1}^{1}e^{at}(1-t^{2})^{\alpha-1/2}\d t},\qquad\forall\,a\in\R,\,\forall\,\alpha\geq0.
$$

So, setting $\alpha=(n/2)-1$ here, the ratio from \eqref{two1} is equal to 
$$\frac{n-1}{2}\Big(\frac{2(1-\rho^{2})}{\rho rs}\Big)\frac{I_{n/2}(\rho rs/[1-\rho^{2}])}{I_{(n/2)-1}(\rho rs/[1-\rho^{2}])}.$$
Combining with \eqref{one6}, we have
\begin{equation}\label{lameq}
\lambda_{1,\sdimn}^{r,s}
=\frac{\rho rs}{1-\rho^{2}}\frac{1}{n-1}\frac{n-1}{2}\Big(\frac{2(1-\rho^{2})}{\rho rs}\Big)\frac{I_{n/2}(\rho rs/[1-\rho^{2}])}{I_{(n/2) -1}(\rho rs/[1-\rho^{2}])}
=\frac{I_{n/2}(\rho rs/[1-\rho^{2}])}{I_{(n/2)-1}(\rho rs/[1-\rho^{2}])}.
\end{equation}

We have \cite[p. 241]{amos74}
\begin{equation}\label{two2}
\frac{a}{\alpha+1+\sqrt{(\alpha+1)^{2}+a^{2}}}\leq\frac{I_{\alpha+1}(a)}{I_{\alpha}(a)}\leq\frac{a}{\alpha+\sqrt{\alpha^{2}+a^{2}}},\qquad\forall\,a,\alpha\geq0.
\end{equation}
Therefore,
\begin{equation}\label{two3}
\lambda_{1,\sdimn}^{r,s}\geq\frac{\rho rs/[1-\rho^{2}]}{n/2+\sqrt{(n/2)^{2}+[\rho rs/[1-\rho^{2}]]^{2}}},\qquad\forall\,r,s>0,\,\forall\,\rho\in(0,1).
\end{equation}

In order to use our lower bounds on $\lambda_{1,\sdimn}^{r,s}$ in Lemma \ref{lemma3}, we must average $\lambda_{1,\sdimn}^{r,s}$ over $r>0$.  Let $a>0$.  Then

$$\int_{0}^{\infty}\frac{r^{3}e^{-r^{2}/2a^{2}}}{1+\sqrt{1+r^{2}}}\d r=e^{1/2a^{2}}a^{3}\int_{1/a}^{\infty}e^{-t^{2}/2}\d t.$$

Setting $n=3$ in \eqref{lameq}, we get

\begin{equation}\label{two9}
\begin{aligned}
&\int_{0}^{\infty}r^{n-1}\frac{I_{n/2}(ar)}{I_{(n/2)-1}(ar)}e^{-r^{2}/2}\d r
\stackrel{\eqref{two2}}{\geq} \int_{0}^{\infty}r^{n-1}\frac{ar}{n/2+\sqrt{(n/2)^{2}+a^{2}r^{2}}}e^{-r^{2}/2}\d r\\
&= \int_{0}^{\infty}r^{n-1}\frac{2ar/n}{1+\sqrt{1+4a^{2}r^{2}/n^{2}}}e^{-r^{2}/2}\d r
=(n/2a)\int_{0}^{\infty}(nr/[2a])^{n-1}\frac{r}{1+\sqrt{1+r^{2}}}e^{-n^{2}r^{2}/8a^{2}}\d r\\
&=(3/2a)^{3}\int_{0}^{\infty}\frac{r^{3}}{1+\sqrt{1+r^{2}}}e^{-n^{2}r^{2}/8a^{2}}\d r
=(3/2a)^{3}e^{9/8a^{2}}(2/3)^{3}a^{3}\int_{3/2a}^{\infty}e^{-t^{2}/2}\d t\\
&=e^{9/8a^{2}}\int_{3/2a}^{\infty}e^{-t^{2}/2}\d t.
\end{aligned}
\end{equation}

\subsection{First Eigenvalue, Dimension 3}

When $d=1$ and $n=3$, we have an explicit expression for $\lambda_{1,\sdimn}^{r,s}$.
\begin{equation}\label{one9}
\begin{aligned}
\lambda_{1,\sdimn}^{r,s}
&\stackrel{\eqref{lameq}}{=}\Big(\frac{\rho rs}{1-\rho^{2}}\Big)\Big(\frac{1}{2}\Big)
\frac{\int_{-1}^{1} (1-t^{2})e^{\frac{\rho rst}{1-\rho^{2}}}\d t}
{\int_{-1}^{1}e^{\frac{\rho rst}{1-\rho^{2}}}\d t}
=1-\frac{1}{\Big(\frac{\rho rs}{1-\rho^{2}}\Big)}+\frac{2}{e^{\frac{2\rho rs}{1-\rho^{2}}}-1},
\end{aligned}
\end{equation}
for all $r,s>0$, and for all $0<\rho<1$.  Here we used, with $a=\rho rs/(1-\rho^{2})$,
$$\int_{-1}^{1}e^{at}\d t=\frac{1}{a}[e^{a}-e^{-a}].$$
\begin{flalign*}
\int_{-1}^{1}(1-t^{2})e^{at}\d t
&=\frac{1}{a}\int_{-1}^{1}(1-t^{2})\frac{\d}{\d t}e^{at}\d t
=\frac{1}{a}\int_{-1}^{1}2te^{at}\d t\\
&=\frac{1}{a^{2}}\int_{-1}^{1}2t\frac{\d}{\d t}e^{at}\d t
=\frac{1}{a^{2}}\Big[2te^{at}|_{t=-1}^{t=1}-\int_{-1}^{1}2e^{at}\d t\Big]\\
&=\frac{1}{a^{2}}\Big[2(e^{a}+e^{-a})-\frac{2}{a}[e^{a}-e^{-a}]\Big].\\
\end{flalign*}
$$
\frac{\int_{-1}^{1}(1-t^{2})e^{at}\d t}{\int_{-1}^{1}e^{at}\d t}
=\frac{1}{a}\Big(2\frac{e^{a}+e^{-a}}{e^{a}-e^{-a}}-\frac{2}{a}\Big)
=\frac{1}{a}\Big(2+4\frac{e^{-a}}{e^{a}-e^{-a}}-\frac{2}{a}\Big)
=\frac{1}{a}\Big(2+4\frac{1}{e^{2a}-1}-\frac{2}{a}\Big).
$$

\section{Change of Measure}

The following inequality allows us to show that the second term in Lemma \ref{lemma3} is smaller than the last term.  The proof amounts to an elementary truncated heat kernel bound, though with a change of measure using the $\lambda_{1,n}^{r,s}$ term from \eqref{one6} (i.e. \eqref{one9}).

\begin{lemma}\label{lemma28}
Let $f\colon\R^{3}\to S^{2}$ be a radial function (for any $r>0$, the function $f|_{rS^{2}}$ is constant).  Let $\phi\colon\R^{3}\to(0,\infty)$.  Then
\begin{flalign*}
&\abs{\int_{\R^{3}}\langle f(x)-\E_{\gamma}f,\, T_{\rho}[f-\E_{\gamma}f](x)\rangle\gamma_{3}(x)\,\d x}\\
&\leq \int_{\R^{3}}\phi(x)\vnorm{f(x)}^{2}\gamma_{3}(x)\,\d x \sum_{d\geq2\colon d\,\mathrm{even}}\rho^{d}\sum_{k\in(2\N)^{2}\colon \vnorm{k}_{1}=d}\int_{\R^{3}} \abs{\sqrt{k!}h_{k}(x)\frac{1}{\sqrt{\phi(x)}}}^{2}\gamma_{3}(x)\,\d x.
\end{flalign*}
In particular, if $\phi(x)=1-\frac{1}{\rho\vnorm{x}}+\frac{2}{e^{2\rho\vnorm{x}}-1}$ for all $x\in\R^{3}$, and if $0<\rho<1/9$, then
$$
\abs{\int_{\R^{3}}\langle f(x)-\E_{\gamma}f,\, T_{\rho}[f-\E_{\gamma}f](x)\rangle\gamma_{3}(x)\,\d x}
\leq(9.4\rho)\int_{\R^{3}}\phi(x)\vnorm{f(x)}^{2}\gamma_{3}(x)\d x.
$$
\end{lemma}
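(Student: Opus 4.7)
The plan is to expand $f-\E_\gamma f$ in the orthonormal Hermite basis $\{\tilde H_k\}_{k\in\N^3}$ of $L^2(\gamma_3)$ (with $\tilde H_k:=\sqrt{k!}\,h_k$), which diagonalizes $T_\rho$ via $T_\rho\tilde H_k=\rho^{|k|_1}\tilde H_k$. Since $f\colon\R^3\to S^2$ is radial, each scalar component of $f$ is even in every coordinate $x_j$; because $\tilde H_k=\prod_j \tilde h_{k_j}(x_j)$ has parity $k_j$ in $x_j$, the vector Hermite coefficient $v_k:=\int_{\R^3}f(x)\tilde H_k(x)\gamma_3(x)\,\mathrm{d}x\in\R^3$ vanishes unless every $k_j$ is even. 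Consequently
\[
\int_{\R^3}\langle f-\E_\gamma f,\,T_\rho[f-\E_\gamma f]\rangle\gamma_3\,\mathrm{d}x=\sum_{k\in(2\N)^3\setminus\{0\}}\rho^{|k|_1}\vnorm{v_k}^2,
\]
so only even $|k|_1\geq 2$ contribute, matching the index set in the statement.

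For the first (abstract) inequality, I would write $v_k=\int(\sqrt{\phi}\,f)\cdot(\tilde H_k/\sqrt\phi)\,\gamma_3$ componentwise and apply Cauchy-Schwarz to get $\vnorm{v_k}^2\leq\bigl(\int\phi\vnorm{f}^2\gamma_3\bigr)\bigl(\int\tilde H_k^2/\phi\,\gamma_3\bigr)$. Summing against $\rho^{|k|_1}$ and factoring out the $\phi$-weighted energy of $f$ produces the stated bound exactly.

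For the concrete bound with the specific $\phi$ and $\rho\in(0,1/9)$, I would interchange sum and integral and apply Mehler's identity
\[
\sum_{k\in\N^3}\rho^{|k|_1}\tilde H_k(x)\tilde H_k(y)=\frac{1}{(1-\rho^2)^{3/2}}\exp\!\Big(\tfrac{2\rho\langle x,y\rangle-\rho^2(\vnorm{x}^2+\vnorm{y}^2)}{2(1-\rho^2)}\Big)=:M_\rho(x,y),
\]
together with the parity trick $(-1)^{k_j}\tilde H_{k_j}(x)^2=\tilde H_{k_j}(x)\tilde H_{k_j}(-x)$, to write the even-index generating function as $K_\rho(x):=\sum_{k\in(2\N)^3}\rho^{|k|_1}\tilde H_k(x)^2=\tfrac{1}{8}\sum_{\varepsilon\in\{\pm1\}^3}M_\rho(x,\varepsilon\cdot x)$. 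The task then reduces to proving $\int_{\R^3}(K_\rho(x)-1)\phi(x)^{-1}\gamma_3(x)\,\mathrm{d}x\leq 9.4\,\rho$ throughout $0<\rho<1/9$.

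The main obstacle is this last one-variable estimate, obtained after passage to polar coordinates in $r=\vnorm{x}$. The function $\phi$ is tuned so that its vanishing of order $\rho r/3$ at the origin is exactly compensated by the vanishing of $K_\rho-1$ to order $\rho^2\vnorm{x}^2$ there, while at infinity $\phi\to 1$ and $K_\rho-1$ decays in a Gaussian fashion provided $\rho<1/9$ (which keeps $\rho/(1+\rho)$ safely below $1/2$ so all relevant Gaussian integrals converge). I expect the explicit constant $9.4$ to emerge from a direct Taylor expansion of the resulting radial integrand in $\rho$, identifying the $O(\rho)$ coefficient explicitly and then bounding the higher-order tail by a crude monotonicity estimate such as $M_\rho(x,\varepsilon\cdot x)\leq M_\rho(x,x)$ to simplify the sum over $\varepsilon\in\{\pm1\}^3$.
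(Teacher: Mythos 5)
Your overall strategy matches the paper's: expand in the Hermite basis, observe that radial $f$ forces the coefficients to be supported on multi-indices with every entry even (the paper's index set ``$(2\N)^{2}$'' appears to be a typo for $(2\N)^{3}$; your reading is the correct one), apply Cauchy--Schwarz against $\sqrt{\phi}$, and control the resulting diagonal Mehler sum. Your remark that the leading $\rho\vnorm{x}^{2}$ term cancels by parity, so that $K_{\rho}-1=O(\rho^{2})$, is also the right heuristic for why $\phi\approx\rho\vnorm{x}/3$ near the origin compensates to produce an overall $O(\rho)$.

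The gap is in the final numerical estimate. Your proposed ``crude monotonicity estimate $M_{\rho}(x,\varepsilon x)\leq M_{\rho}(x,x)$'' destroys exactly the parity cancellation you just invoked: $M_{\rho}(x,x)-1=\rho\vnorm{x}^{2}+O(\rho^{2})$, and $\int\rho\vnorm{x}^{2}\phi(x)^{-1}\gamma_{3}(x)\,\d x$ is $O(1)$, not $O(\rho)$, since $\phi^{-1}\approx 3/(\rho\vnorm{x})$ near the origin. So bounding $K_{\rho}$ by $M_{\rho}(x,x)$ alone (or using that bound on an ill-defined ``tail'') cannot yield a constant times $\rho$. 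What the paper does instead is observe that the summands $\rho^{\vnormf{k}_{1}}k!h_{k}(x)^{2}$ are nonnegative, so $\sum_{k\in(2\N)^{3}}\leq\sum_{\vnormf{k}_{1}\text{ even}}$; the latter is the \emph{radial} kernel $\tfrac12\big[M_{\rho}(x,x)+M_{\rho}(x,-x)\big]$, which keeps the $\rho\vnorm{x}^{2}$ cancellation intact and is a sum of just two Gaussians. The paper then uses the concrete pointwise inequality $r^{2}/\phi(r)\leq 3r/\rho+r^{2}$ to replace the awkward $\phi^{-1}$ weight by a polynomial, after which all Gaussian integrals are explicit; this is the step your ``Taylor expansion of the radial integrand in $\rho$'' would need to make rigorous, since $\phi$ itself depends on $\rho$ (through $\rho\vnorm{x}$) and a naive $\rho$-expansion of $(K_{\rho}-1)/\phi$ is not clean. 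In short: keep the balanced two-point kernel rather than averaging over all eight signs, drop the one-sided monotonicity bound, and add the explicit $r^{2}/\phi$ bound.
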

\begin{proof}
Let $h_{0},h_{1},\ldots\colon\R\to\R$ be the Hermite polynomials with $h_{m}(x)=\sum_{k=0}^{\lfloor m/2\rfloor}\frac{x^{m-2k}(-1)^{k}2^{-k}}{k!(m-2k)!}$ for all integers $m\geq0$.  It is well known \cite{heilman12} that $\{\sqrt{m!}h_{m}\}_{m\geq0}$ is an orthonormal basis of the Hilbert space of functions $\R\to\R$ equipped with the inner product $\langle g,h\rangle\colonequals\int_{\R}g(x)h(x)\gamma_{3}(x)\,\d x$.  For any $k\in\N^{2}$, define $k!\colonequals k_{1}!\cdot k_{2}!$, and define $\vnorm{k}_{1}\colonequals\abs{k_{1}}+\abs{k_{2}}$.
\begin{flalign*}
&\int_{\R^{3}}\langle f(x)-\E_{\gamma}f,\, T_{\rho}[f-\E_{\gamma}f](x)\rangle\gamma_{3}(x)\,\d x\\
&=\sum_{d\geq2\colon d\,\mathrm{even}}\rho^{d}\sum_{k\in(2\N)^{2}\colon \vnorm{k}_{1}=d}\vnorm{\int_{\R^{3}} \sqrt{k!}h_{k}(x) (f(x)-\E_{\gamma}f)\gamma_{3}(x)\,\d x}^{2}\\
&=\sum_{d\geq2\colon d\,\mathrm{even}}\rho^{d}\sum_{k\in(2\N)^{2}\colon \vnorm{k}_{1}=d}\vnorm{\int_{\R^{3}} \sqrt{k!}h_{k}(x) f(x)\gamma_{3}(x)\,\d x}^{2}\\
&=\sum_{d\geq2\colon d\,\mathrm{even}}\rho^{d}\sum_{k\in(2\N)^{2}\colon \vnorm{k}_{1}=d}\vnorm{\int_{\R^{3}} \sqrt{k!}h_{k}(x)\frac{1}{\sqrt{\phi(x)}} \sqrt{\phi(x)}f(x)\gamma_{3}(x)\,\d x}^{2}\\
&\leq\sum_{d\geq2\colon d\,\mathrm{even}}\rho^{d}\sum_{\substack{k\in(2\N)^{2}\colon\\ \vnorm{k}_{1}=d}}\int_{\R^{3}} \abs{\sqrt{k!}h_{k}(x)\frac{1}{\sqrt{\phi(x)}}}^{2}\gamma_{3}(x)\,\d x
\cdot\int_{\R^{3}}\phi(x)\vnorm{f(x)}^{2}\gamma_{3}(x)\,\d x\\
&=\int_{\R^{3}}\phi(x)\vnorm{f(x)}^{2}\gamma_{3}(x)\,\d x\cdot \sum_{d\geq2\colon d\,\mathrm{even}}\rho^{d}\sum_{\substack{k\in(2\N)^{2}\colon\\ \vnorm{k}_{1}=d}}\int_{\R^{3}} \abs{\sqrt{k!}h_{k}(x)\frac{1}{\sqrt{\phi(x)}}}^{2}\gamma_{3}(x)\,\d x.
\end{flalign*}

Using the inequality
$$\frac{r^{2}}{\phi(r)}\leq \frac{3}{\rho}r+r^{2},\qquad\forall\,r>0,$$
which can e.g. be verified in Matlab
\begin{verbatim}
rho=.1;
r=linspace(.1,20,1000);
rsqphi = r.^2 ./ (1 - 1./(rho*r) + 2./(exp(2*rho*r) -1));
plot(r, rsqphi , r, 3*r/rho +  r.^2);
legend('r^2 / phi','upper bound');
if sum( rsqphi - (3*r/rho + r.^2)>0)==0, fprintf('Verified\r'), end
\end{verbatim}

\begin{flalign*}
&\sum_{d\geq2\colon d\,\mathrm{even}}\rho^{d}\sum_{k\in(2\N)^{2}\colon \vnorm{k}_{1}=d}\int_{\R^{3}} \abs{\sqrt{k!}h_{k}(x)\frac{1}{\sqrt{\phi(x)}}}^{2}\gamma_{3}(x)\,\d x\\
&\qquad=\int_{\R^{3}}\frac{1}{\phi(x)}\Big[(2\pi)^{3/2}\frac{G_{\rho}(x,x)+G_{\rho}(x,-x)}{2e^{-\vnorm{x}^{2}/2}} - \gamma_{3}(x)\Big]\, \d x\\
&\qquad=\int_{\R^{3}}\frac{1}{\phi(x)}\Big[\frac{1}{(2\pi)^{3/2}}\frac{1}{(1-\rho^{2})^{3/2}} e^{-\frac{\vnorm{x}^{2}}{1-\rho^{2}}}
\frac{e^{\frac{\rho\vnorm{x}^{2}}{1-\rho^{2}}}+e^{-\frac{\rho\vnorm{x}^{2}}{1-\rho^{2}}}}{2e^{-r^{2}/2}} - \gamma_{3}(x)\Big]\, \d x\\
&\qquad=\sqrt{\frac{2}{\pi}}\int_{r=0}^{\infty}\Big(\frac{3}{\rho}r+r^{2}\Big)\frac{1}{2(1-\rho^{2})^{3/2}}\\
&\qquad\qquad\qquad\qquad\cdot\Big(-2(1-\rho^{2})^{3/2}e^{-r^{2}/2}+e^{-r^{2}\frac{1-\rho}{1-\rho^{2}}}e^{r^{2}/2}+e^{-r^{2}\frac{1+\rho}{1-\rho^{2}}}e^{r^{2}/2}\Big)\, \d r\\
&\qquad=\sqrt{\frac{2}{\pi}}\int_{r=0}^{\infty}\Big(\frac{3}{\rho}r+r^{2}\Big)\frac{1}{2(1-\rho^{2})^{3/2}}\\
&\qquad\qquad\qquad\qquad\cdot\Big(-2(1-\rho^{2})^{3/2}e^{-r^{2}/2}+e^{-r^{2}\left(\frac{1}{1+\rho}-\frac{1}{2}\right)}+e^{-r^{2}\left(\frac{1}{1-\rho}-\frac{1}{2}\right)}\Big)\, \d r\\
&\qquad=\sqrt{\frac{2}{\pi}}\frac{1}{2(1-\rho^{2})^{3/2}}\frac{3}{\rho}\Big(-2(1-\rho^{2})^{3/2}+\frac{1}{2[\frac{1}{1+\rho}-\frac{1}{2}]}+ \frac{1}{2[\frac{1}{1-\rho}-\frac{1}{2}]} \Big)\\
&\qquad\qquad+ \sqrt{\frac{2}{\pi}}\frac{1}{(1-\rho^{2})^{3/2}}\frac{\sqrt{\pi}}{8}\Big(-2\cdot 2^{3/2}(1-\rho^{2})^{3/2}+\frac{1}{[\frac{1}{1+\rho}-\frac{1}{2}]^{3/2}}+\frac{1}{[\frac{1}{1-\rho}-\frac{1}{2}]^{3/2}}\Big).
\end{flalign*}

When $\rho<1/9$, this quantity is upper bounded by $9.4\rho$.  

\end{proof}

For the negative correlation case of the main result, we require a bilinear version of Lemma \ref{lemma28} above.  Lemma \ref{lemma29} follows from Lemma \ref{lemma28} and the Cauchy-Schwarz inequality.
\begin{lemma}\label{lemma29}
Let $f,g\colon\R^{3}\to S^{2}$ be radial functions (for any $r>0$, the function $f|_{rS^{2}}$ is constant).  Let $\phi(x)=1-\frac{1}{\rho\vnorm{x}}+\frac{2}{e^{2\rho\vnorm{x}}-1}$ for all $x\in\R^{3}$.  If $0<\rho<1/9$, then
$$
\abs{\int_{\R^{3}}\langle f(x)-\E_{\gamma}f,\, T_{\rho}[g-\E_{\gamma}g](x)\rangle\gamma_{3}(x)\,\d x}
\leq(9.4\rho)\int_{\R^{3}}\phi(x)\frac{1}{2}[\vnorm{f(x)}^{2}+\vnorm{g(x)}^{2}]\gamma_{3}(x)\d x.
$$
\end{lemma}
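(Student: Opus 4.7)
The plan is to derive Lemma \ref{lemma29} from Lemma \ref{lemma28} by observing that, on radial mean-zero functions, the noise-stability bilinear form
$$\langle F,G\rangle_{\rho}\colonequals\int_{\R^{3}}\langle F(x),T_{\rho}G(x)\rangle\gamma_{3}(x)\,\d x$$
is positive semidefinite, so Cauchy-Schwarz followed by AM-GM reduces the bilinear bound to the two diagonal bounds supplied by Lemma \ref{lemma28}.

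First, I would establish the positive semidefiniteness of $\langle\cdot,\cdot\rangle_{\rho}$ on radial mean-zero functions $\R^{3}\to\R^{3}$. This is immediate from the Hermite expansion already used in the proof of Lemma \ref{lemma28}: since $F$ and $G$ are radial (in particular invariant under each coordinate reflection $x_{i}\mapsto-x_{i}$), only multi-indices $k$ with every entry even contribute, giving
$$\langle F,G\rangle_{\rho}=\sum_{d\geq 2\colon d\text{ even}}\rho^{d}\sum_{\substack{k\in(2\N)^{3}\\ \vnorm{k}_{1}=d}}\langle\hat{F}(k),\hat{G}(k)\rangle,$$
where $\hat{F}(k)\colonequals\int_{\R^{3}}\sqrt{k!}\,h_{k}(x)F(x)\gamma_{3}(x)\,\d x\in\R^{3}$ and analogously for $\hat{G}$. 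Every coefficient $\rho^{d}$ is positive since $d$ is even and $\rho>0$, so in particular $\langle F,F\rangle_{\rho}\geq 0$ and the form is PSD.

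Second, I would apply Cauchy-Schwarz and AM-GM in this PSD form:
$$\abs{\langle F,G\rangle_{\rho}}\leq\sqrt{\langle F,F\rangle_{\rho}\langle G,G\rangle_{\rho}}\leq\tfrac{1}{2}\langle F,F\rangle_{\rho}+\tfrac{1}{2}\langle G,G\rangle_{\rho}.$$
Setting $F\colonequals f-\E_{\gamma}f$ and $G\colonequals g-\E_{\gamma}g$, and invoking Lemma \ref{lemma28} separately to bound $\langle F,F\rangle_{\rho}\leq(9.4\rho)\int_{\R^{3}}\phi(x)\vnorm{f(x)}^{2}\gamma_{3}(x)\,\d x$ and $\langle G,G\rangle_{\rho}\leq(9.4\rho)\int_{\R^{3}}\phi(x)\vnorm{g(x)}^{2}\gamma_{3}(x)\,\d x$, then averaging, yields exactly the stated inequality.

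There is essentially no obstacle; the argument is a short polarization-style reduction. The only point requiring care is the PSD observation, which is a direct corollary of the Hermite expansion carried out in detail in the proof of Lemma \ref{lemma28}, combined with the fact that only even degrees $d$ appear so $\rho^{d}\geq 0$ throughout.
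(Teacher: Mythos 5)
Your proof is correct and matches the paper's approach: the paper states only that Lemma \ref{lemma29} ``follows from Lemma \ref{lemma28} and the Cauchy-Schwarz inequality,'' and your derivation via the positive-semidefinite bilinear form $\langle\cdot,\cdot\rangle_\rho$, Cauchy-Schwarz, AM-GM, and two applications of Lemma \ref{lemma28} is precisely the natural way to fill in that outline.  A small simplification worth noting: positive semidefiniteness of $\langle\cdot,\cdot\rangle_\rho$ already holds for arbitrary $F,G\in L^{2}(\gamma_{3})$ whenever $\rho\geq 0$, since each Hermite coefficient $\rho^{\vnorm{k}_{1}}$ is then nonnegative, so the even-degree observation coming from radiality, while true, is not actually needed for the PSD step.
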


\section{Proof of Main Theorem: Quadratic Case}

\begin{proof}[Proof of Theorem \ref{thm1} when $\rho>0$]
The dimension reduction Theorem \cite[Theorem 6.11]{hwang21} implies that we may assume $n=k=3$.  From Lemma \ref{lemma3}
\begin{equation}\label{four0}
\underset{X\sim_{\rho}Y}{\E}\langle f(X),f(Y)\rangle
-\underset{X\sim_{\rho}Y}{\E}\lambda_{1,\sdimn}^{\vnorm{X},\vnorm{Y}}
\leq
\underset{X\sim_{\rho}Y}{\E}\Big(\langle \E f_{\vnorm{X}}, \E f_{\vnorm{Y}}\rangle
-\vnorm{\E f_{\vnorm{X}}}^{2}\lambda_{1,\sdimn}^{\vnorm{X},\vnorm{Y}}\Big).
\end{equation}
It remains to show that the right side is nonpositive.  Since $\E_{\gamma}f\colonequals\int_{\R^{n}}f(x)\gamma_{n}(x)\,\d x=0$, we have by Lemma \ref{lemma28} that
\begin{equation}\label{four1}
\begin{aligned}
\underset{X\sim_{\rho}Y}{\E}\langle \E f_{\vnorm{X}}, \E f_{\vnorm{Y}}\rangle
&=\underset{X\sim_{\rho}Y}{\E}\langle \E f_{\vnorm{X}}-\E_{\gamma}f, \E f_{\vnorm{Y}}-\E_{\gamma}f\rangle\\
&\leq9.4\rho\int_{\R^{n}}\phi(x)\vnorm{\E f_{\vnorm{x}}}^{2}\gamma_{n}(x)\, d x.
\end{aligned}
\end{equation}
Here $\phi(x)\colonequals 1 - \frac{1}{\rho\vnorm{x}}+\frac{2}{\exp^{2\rho\vnorm{x}} -1}\stackrel{\eqref{one9}}{=}\lambda_{1,n}^{\vnorm{x},1}$.  Meanwhile, the last term in \eqref{four0} satisfies
\begin{equation}\label{four2}
\begin{aligned}
&\underset{X\sim_{\rho}Y}{\E}\vnorm{\E f_{\vnorm{X}}}^{2}\lambda_{1,\sdimn}^{\vnorm{X},\vnorm{Y}}
\stackrel{\eqref{oudef}}{=}\int_{\R^{3}}\vnorm{\E f_{\vnorm{x}}}^{2}\Big(\int_{\R^{3}}\lambda_{1,\sdimn}^{\vnormf{\rho x+y\sqrt{1-\rho^{2}}},\vnorm{x}}\gamma_{n}(y)\,\d y\Big)\gamma_{n}(x)\,\d x\\
&\qquad=\int_{\R^{3}}\vnorm{\E f_{\vnorm{x}}}^{2}\Big(\sqrt{\frac{2}{\pi}}\frac{1}{4\pi}\int_{u\in S^{2}}\int_{r=0}^{\infty}r^{2}\lambda_{1,\sdimn}^{\vnormf{\rho x+ru\sqrt{1-\rho^{2}}},\vnorm{x}}e^{-\vnorm{r}^{2}/2}\,\d r \d u\Big)\gamma_{n}(x)\,\d x\\
&\qquad\stackrel{\eqref{one9}}{\geq}\int_{\R^{3}}\vnorm{\E f_{\vnorm{x}}}^{2}\Big(\sqrt{\frac{2}{\pi}}\frac{1}{4\pi}\int_{u\in S^{2}}\int_{r=0}^{\infty}r^{2}\lambda_{1,\sdimn}^{\vnormf{ru\sqrt{1-\rho^{2}}},\vnorm{x}}e^{-\vnorm{r}^{2}/2}\,\d r \d u\Big)\gamma_{n}(x)\,\d x\\
&\qquad=\int_{\R^{3}}\vnorm{\E f_{\vnorm{x}}}^{2}\Big(\sqrt{\frac{2}{\pi}}\int_{r=0}^{\infty}\lambda_{1,\sdimn}^{r\sqrt{1-\rho^{2}},\vnorm{x}}e^{-\vnorm{r}^{2}/2}\,\d r\Big)\gamma_{n}(x)\,\d x.
\end{aligned}
\end{equation}
The inequality used that $\lambda_{1,2}^{a,b}$ is an increasing function of $a>0$ by \eqref{one9}, so the average over $r,u$ is smallest when $\rho x=0$.

From \eqref{two9} when $n=3$ and $a=\rho rs/(1-\rho^{2})$ with \eqref{lameq}, if $s>0$ and $0<\rho<1/5$, then
\begin{flalign*}
\sqrt{\frac{2}{\pi}}\int_{r=0}^{\infty}r^{2}\lambda_{1,\sdimn}^{r\sqrt{1-\rho^{2}},s}e^{-\vnorm{r}^{2}/2}\,\d r
&\stackrel{\eqref{two9}\wedge\eqref{lameq}}{\geq} \sqrt{\frac{2}{\pi}}e^{\frac{9(1-\rho^{2})}{8s^{2}\rho^{2}}}\int_{\frac{3\sqrt{1-\rho^{2}}}{2\rho s}}^{\infty}e^{-t^{2}/2}\d t\\
&\geq(.98)\Big(1 - \frac{1}{\rho s}+\frac{2}{e^{2\rho s} -1}\Big).
\end{flalign*}

%

The last inequality can be verified e.g. with Matlab
\begin{verbatim}
rho=.03;
r=linspace(0,1000,1000);
phi=.98*(1 -  1./(rho * r ) +2./(exp(2* rho *r ) -1));
y=exp(  9*(1-rho^2) ./ (8* r.^2 * rho^2))  ...
 .* erfc( 3*sqrt(1-rho^2) ./ (2*sqrt(2) * rho *r));
plot(r, phi, r, y);
legend('lambda lower bd','integral quant (larger)');
if sum(y-phi<0)==0, fprintf('Verified\r'), end
\end{verbatim}

Substituting this into \eqref{four2}, we get

\begin{equation}\label{four3}
\underset{X\sim_{\rho}Y}{\E}\vnorm{\E f_{\vnorm{X}}}^{2}\lambda_{1,\sdimn}^{\vnorm{X},\vnorm{Y}}
\geq.98\int_{\R^{3}}\phi(x)\vnorm{\E f_{\vnorm{x}}}^{2}\gamma_{n}(x)\,\d x.
\end{equation}

Combining \eqref{four0}, \eqref{four3} and \eqref{four1}, we have
\begin{equation}\label{four4}
\underset{X\sim_{\rho}Y}{\E}\langle f(X),f(Y)\rangle
-\underset{X\sim_{\rho}Y}{\E}\lambda_{1,\sdimn}^{\vnorm{X},\vnorm{Y}}
\leq(9.4\rho - .98)\int_{\R^{n}}\phi(x)\vnorm{\E f_{\vnorm{x}}}^{2}\gamma_{n}(x)\, d x.
\end{equation}
If $\rho<.104$, the right side is nonpositive, and it is equal to zero only when $\E f_{\vnorm{x}}=0$ for a.e. $x\in\R^{3}$.  That is,
$$\underset{X\sim_{\rho}Y}{\E}\langle f(X),f(Y)\rangle
\stackrel{\eqref{four4}}{\leq}\underset{X\sim_{\rho}Y}{\E}\lambda_{1,\sdimn}^{\vnorm{X},\vnorm{Y}}
\stackrel{\eqref{foptdef}\wedge\eqref{nine7}}{=}\underset{X\sim_{\rho}Y}{\E}\langle f_{\rm opt}(X),f_{\rm opt}(Y)\rangle.$$
with equality only when $\E f_{\vnorm{x}}=0$ for a.e. $x\in\R^{3}$.  Finally, if $\E f_{\vnorm{x}}=0$ for a.e. $x\in\R^{3}$, then Lemma \ref{lemma2} implies that we must have $f=f_{\rm opt}(M\cdot)$ for some real $3\times 3$ orthogonal matrix $M$.
\end{proof}

\section{Proof of Main Theorem: Bilinear Case}\label{seclast}

\begin{proof}[Proof of Theorem \ref{thm1} when $\rho<0$]
In order to prove Theorem \ref{thm1} for negative $\rho$, we consider instead $\rho>0$ but with a bilinear variant of the noise stability over functions $f,g\colon\R^{n}\to S^{2}$ under the constraint that $\E_{\gamma}f=\E_{\gamma}g=0$.

So, within the proof below, we have $\rho>0$, and we will show that
\begin{equation}\label{six1}
\underset{X\sim_{\rho}Y}{\E}\langle f(X),g(Y)\rangle\geq -\underset{X\sim_{\rho}Y}{\E}\langle f_{\rm opt}(X),f_{\rm opt}(Y)\rangle.
\end{equation}
Dimension Reduction (Theorem \ref{thm9} below) implies that we may assume $n=k=3$ in order to prove \eqref{six1}.  Equation \eqref{six1} proves Theorem \ref{thm1} for negative correlations, since
\begin{flalign*}
\underset{X\sim_{(-\rho)}Y}{\E}\langle f(X),f(Y)\rangle
&\stackrel{\eqref{gdef}}=\underset{X\sim_{\rho}Y}{\E}\langle f(X),f(-(Y))\rangle
\stackrel{\eqref{six1}}{\geq}-\underset{X\sim_{\rho}Y}{\E}\langle f_{\rm opt}(X),f_{\rm opt}(Y)\rangle\\
&\stackrel{\eqref{gdef}}=\underset{X\sim_{(-\rho)}Y}{\E}\langle f_{\rm opt}(X),f_{\rm opt}(-Y)\rangle
\stackrel{\eqref{foptdef}}{=}\underset{X\sim_{(-\rho)}Y}{\E}\langle f_{\rm opt}(X),-f_{\rm opt}(Y)\rangle.
\end{flalign*}

So, it remains to prove \eqref{six1}.  From Lemma \ref{lemma7}
\begin{equation}\label{five0}
\underset{X\sim_{\rho}Y}{\E}\Big(\langle f(X),g(Y)\rangle
+\lambda_{1,\sdimn}^{\vnorm{X},\vnorm{Y}}\Big)
\geq
\underset{X\sim_{\rho}Y}{\E}\Big(\langle \E f_{\vnorm{X}}, \E g_{\vnorm{Y}}\rangle
+\frac{1}{2}(\vnorm{\E f_{\vnorm{X}}}^{2}+\vnorm{\E g_{\vnorm{X}}}^{2})\lambda_{1,\sdimn}^{\vnorm{X},\vnorm{Y}}\Big).
\end{equation}
It remains to show that the right side is nonnegative.  Since $\E_{\gamma}f=\E_{\gamma}g$ by assumption,
\begin{flalign*}
\underset{X\sim_{\rho}Y}{\E}\langle \E f_{\vnorm{X}}, \E g_{\vnorm{Y}}\rangle
&=\vnorm{\E_{\gamma}f}^{2}+\underset{X\sim_{\rho}Y}{\E}\langle \E f_{\vnorm{X}}- \E_{\gamma}f, \E g_{\vnorm{Y}}-\E_{\gamma}g\rangle\\
&\geq \underset{X\sim_{\rho}Y}{\E}\langle \E f_{\vnorm{X}}- \E_{\gamma}f, \E g_{\vnorm{Y}}-\E_{\gamma}g\rangle.
\end{flalign*}
So, by Lemma \ref{lemma29},
\begin{equation}\label{five1}
\begin{aligned}
\underset{X\sim_{\rho}Y}{\E}\langle \E f_{\vnorm{X}}, \E g_{\vnorm{Y}}\rangle
&\geq\underset{X\sim_{\rho}Y}{\E}\langle \E f_{\vnorm{X}}-\E_{\gamma}f, \E g_{\vnorm{Y}}-\E_{\gamma}g\rangle\\
&\geq-9.4\rho\int_{\R^{n}}\phi(x)\frac{1}{2}(\vnorm{\E f_{\vnorm{x}}}^{2}+\vnorm{\E g_{\vnorm{x}}}^{2})\gamma_{n}(x)\, d x.
\end{aligned}
\end{equation}
Here $\phi(x)\colonequals 1 - \frac{1}{\rho\vnorm{x}}+\frac{2}{\exp^{2\rho\vnorm{x}} -1}\stackrel{\eqref{one9}}{=}\lambda_{1,n}^{\vnorm{x},1}$.  Meanwhile, \eqref{four3} says that

\begin{equation}\label{five3}
\underset{X\sim_{\rho}Y}{\E}\vnorm{\E f_{\vnorm{X}}}^{2}\lambda_{1,\sdimn}^{\vnorm{X},\vnorm{Y}}
\geq.98\int_{\R^{3}}\phi(x)\vnorm{\E f_{\vnorm{x}}}^{2}\gamma_{n}(x)\,\d x.
\end{equation}
\begin{equation}\label{five3p2}
\underset{X\sim_{\rho}Y}{\E}\vnorm{\E g_{\vnorm{X}}}^{2}\lambda_{1,\sdimn}^{\vnorm{X},\vnorm{Y}}
\geq.98\int_{\R^{3}}\phi(x)\vnorm{\E g_{\vnorm{x}}}^{2}\gamma_{n}(x)\,\d x.
\end{equation}
Adding these together, we get
\begin{equation}\label{five3p3}
\underset{X\sim_{\rho}Y}{\E}\frac{1}{2}(\vnorm{\E f_{\vnorm{X}}}^{2}+\vnorm{\E g_{\vnorm{X}}}^{2})\lambda_{1,\sdimn}^{\vnorm{X},\vnorm{Y}}
\geq.98\int_{\R^{3}}\phi(x)\frac{1}{2}(\vnorm{\E f_{\vnorm{x}}}^{2}+\vnorm{\E g_{\vnorm{x}}}^{2})\gamma_{n}(x)\,\d x.
\end{equation}

Combining \eqref{five0}, \eqref{five3p3} and \eqref{five1}, we have
\begin{equation}\label{five4}
\begin{aligned}
&\underset{X\sim_{\rho}Y}{\E}\langle f(X),g(Y)\rangle
+\underset{X\sim_{\rho}Y}{\E}\lambda_{1,\sdimn}^{\vnorm{X},\vnorm{Y}}\\
&\qquad\qquad\qquad\qquad\geq(.98-9.4\rho)\int_{\R^{n}}\phi(x)\frac{1}{2}(\vnorm{\E f_{\vnorm{x}}(x)}^{2}+\vnorm{\E g_{\vnorm{x}}(x)}^{2})\gamma_{n}(x)\, d x.
\end{aligned}
\end{equation}
If $\rho<.104$, the right side is nonnegative, and it is equal to zero only when $\E f_{\vnorm{x}}=\E g_{\vnorm{x}}=0$ for a.e. $x\in\R^{3}$.  That is,
$$\underset{X\sim_{\rho}Y}{\E}\langle f(X),g(Y)\rangle
\stackrel{\eqref{five4}}{\geq}-\underset{X\sim_{\rho}Y}{\E}\lambda_{1,\sdimn}^{\vnorm{X},\vnorm{Y}}
\stackrel{\eqref{foptdef}\wedge\eqref{nine7}}{=}-\underset{X\sim_{\rho}Y}{\E}\langle f_{\rm opt}(X),f_{\rm opt}(Y)\rangle.$$
with equality only when $\E f_{\vnorm{x}}=\E g_{\vnorm{x}}=0$ for a.e. $x\in\R^{3}$.

If $\E f_{\vnorm{x}}=\E g_{\vnorm{x}}=0$ for all $x\in\R^{3}$, then Lemma \ref{lemma6} implies that we must have $f=-g=f_{\rm opt}(M\cdot)$ almost surely, for some real $3\times 3$ orthogonal matrix $M$.  So, \eqref{six1} is proven, and the negative correlation case of Theorem \ref{thm1} follows.
\end{proof}

\section{Proof of Unique Games Hardness}

\begin{proof}[Proof of Theorem \ref{thm7}]
From \cite[Theorem 11.3]{hwang21}: assuming the Unique Games Conjecture, for any $-1<\rho<0$ and for any $\epsilon>0$, it is NP-hard to approximate the product state of Quantum MAX-CUT within a multiplicative factor of $\alpha_{\rho,\rm BOV}+\epsilon$, where
$$
\alpha_{\rho,\rm BOV}
\colonequals
\lim_{n\to\infty}\sup_{f\colon\R^{n}\to S^{2}}\frac{1-\int_{\R^{n}}\langle f(x), T_{\rho}f(x)\rangle\,\gamma_{n}(x)\,\d x}{1-\rho}.
$$
Theorem \ref{thm1} says, if $-.104<\rho<0$, then
\begin{equation}\label{labeq}
\alpha_{\rho,\rm BOV}=\frac{1-\int_{\R^{3}}\langle f_{\rm opt}(x), T_{\rho}f_{\rm opt}(x)\rangle\,\gamma_{3}(x)\,\d x}{1-\rho}.
\end{equation}
An explicit formula for the noise stability of $f_{\rm opt}$ from \cite[Proposition 7.17]{hwang21} implies
$$
\alpha_{\rho,\rm BOV}
\stackrel{\eqref{foptdef}\wedge\eqref{labeq}}{=}\frac{1-F^{*}(3,\rho)}{1-\rho}
\stackrel{\eqref{fstdef}}{=}\frac{1-\frac{2}{3}\Big(\frac{\Gamma((3+1)/2)}{\Gamma(3/2)}\Big)^{2}\rho\cdot\,\,_{2}F_{1}(1/2,1/2,3/2 +1, \rho^{2})}{1-\rho}.
$$
Since NP hardness then applies from \cite[Theorem 11.3]{hwang21} for any $-.104<\rho<0$, we then have NP-hardness for the infimum of $\alpha_{\rho,\rm BOV}$ over all $-.104<\rho<0$.  That is, it is NP-hard to approximate the product state of Quantum MAX-CUT within a multiplicative factor of
\begin{flalign*}
\inf_{-.104<\rho<0}\alpha_{\rho,\rm BOV}
&=\alpha_{(-.104),\rm BOV}
=\frac{1-\frac{2}{3}\Big(\frac{\Gamma((3+1)/2)}{\Gamma(3/2)}\Big)^{2}(-.104)\cdot\,\,_{2}F_{1}(1/2,1/2,3/2 +1, (.104)^{2})}{1-(-.104)}\\
&=.98584579\ldots.
\end{flalign*}

The above function of $\rho$ is monotone, with minimum at the endpoint $\rho=-.104$.
\end{proof}

\section{Appendix: Dimension Reduction}

\begin{theorem}[\embolden{Bilinear Dimension Reduction}]\label{thm9}
Let $\rho>0$.  Define
$$s_{n}\colonequals \inf_{\substack{f,g\colon\R^{n}\to S^{k-1}\\\E_{\gamma}f=\E_{\gamma}g=0 }}\quad
\underset{X\sim_{\rho}Y}{\E}\langle f(X),g(Y)\rangle.$$
Then
$$s_{n}=s_{k},\qquad\forall\, n\geq k.$$
\end{theorem}
The inequality $s_{n}\leq s_{k}$ easily holds for all $n\geq k$, so the content of Theorem \ref{thm9} is that $s_{n}\geq s_{k}$ for all $n\geq k$.

Theorem \ref{thm9} is a fairly straightforward adaptation of \cite[Theorem 6.11]{hwang21} to the bilinear setting (itself an adaptation of \cite{heilman20d} to the sphere-valued setting).  In order to emphasize the relation between the argument below and that of \cite[Theorem 6.11]{hwang21}, we will match the notation from \cite{hwang21}, where appropriate.

We say a vector field $W\colon\R^{n}\to\R^{k}$ is a \textbf{tame vector field} if $W$ is bounded and $C^{\infty}$, and all of the partial derivatives of $W$ of any order exists and are bounded.

For any $f\colon\R^{n}\to\R^{k}$, for any $t\in\R$, denote
\begin{equation}\label{vdef}
\mathcal{V}_{t,W}f(x)\colonequals\frac{f(x)+tW(x)}{\max(1,\vnorm{f(x)+tW(x)})},\qquad\forall\,x\in\R^{n}.
\end{equation}
Denote also
$$B^{k}\colonequals\{x\in\R^{k}\colon\vnorm{x}\leq 1\}.$$
\begin{definition}
Let $f,g\colon\R^{n}\to B^{k}$ satisfy $\E_{\gamma}f=\E_{\gamma}g=0$.  Fix $0<\rho<1$.  We say that $f,g$ are \textbf{optimally stable} with correlation $\rho$ if, for all $h,k\colon\R^{n}\to B^{k}$ with $\E_{\gamma}h=\E_{\gamma}k=0$,
$$\int_{\R^{k}}\langle f(x), T_{\rho}g(x)\rangle \gamma_{k}(x)\, \d x\leq \int_{\R^{k}}\langle h(x), T_{\rho}k(x)\rangle \gamma_{k}(x)\, \d x.$$
\end{definition}

\begin{lemma}\label{lemma0}
If $f,g$ are optimally stable with correlation $\rho$, then $g=-f$.
\end{lemma}
\begin{proof}
Since $f=g=0$ has $\E_{\gamma}\langle f,T_{\rho}g\rangle=0$, if $f,g$ are optimally stable, then $\E_{\gamma}\langle f,T_{\rho}g\rangle\leq0$.  So, the Cauchy-Schwarz inequality implies that
\begin{flalign*}
\E_{\gamma}\langle f,T_{\rho}g\rangle
&=-\abs{\E_{\gamma}\langle f,T_{\rho}g\rangle}\geq -\abs{\E_{\gamma}\langle f,T_{\rho}f\rangle}^{1/2}\abs{\E_{\gamma}\langle g,T_{\rho}g\rangle}^{1/2}\\
&\geq-\max\Big(\abs{\E_{\gamma}\langle f,T_{\rho}f\rangle},\abs{\E_{\gamma}\langle g,T_{\rho}g\rangle}\Big).
\end{flalign*}
All of these inequalities become equalities when $g=-f$, since $\E_{\gamma}\langle f,T_{\rho}f\rangle\geq0$, so
$$
\E_{\gamma}\langle f,T_{\rho}(-f)\rangle
=-\E_{\gamma}\langle f,T_{\rho}f\rangle
=-\abs{\E_{\gamma}\langle f,T_{\rho}f\rangle}.
$$
\end{proof}

\begin{lemma}[Lemma 6.8, {\cite{hwang21}}]\label{lemma30}
Let $f\colon\R^{n}\to\R^{k}$ be measurable.  Then there exists vectors fields $W_{1},\ldots,W_{k}\colon\R^{n}\to\R^{k}$ such that
$$\mathrm{span}\Big\{\frac{\d}{\d t}\Big|_{t=0}\E_{\gamma} [\mathcal{V}_{t,W}f]\colon 1\leq i\leq k\Big\}=\R^{k}.$$
\end{lemma}

\begin{lemma}[Lemma 6.9, {\cite{hwang21}}]\label{lemma31}
Let $f,g\colon\R^{n}\to\R^{k}$ be optimally stable.  Then, for any bounded measurable vector fields $W,Z\colon\R^{n}\to\R^{k}$ such that $\frac{\d}{\d t}|_{t=0}\E_{\gamma} \mathcal{V}_{t,W}f=\frac{\d}{\d t}|_{t=0}\E_{\gamma} \mathcal{V}_{t,Z}g=0$, we have
$$\frac{\d}{\d t}\Big|_{t=0}\underset{X\sim_{\rho}Y}{\E}\langle\mathcal{V}_{t,W}f(X),\, \mathcal{V}_{t,Z}g(Y)\rangle=0.$$
\end{lemma}

\begin{proof}
$$f_{\alpha,\beta}\colonequals\frac{f(x)+\beta W(x)+\sum_{i=1}^{k}\alpha_{i}W_{i}(x)  }{\max\Big(1,\vnormf{f(x)+\beta W(x)+\sum_{i=1}^{k}\alpha_{i}W_{i}(x)}\Big)}.$$
$$g_{\alpha,\beta}\colonequals\frac{g(x)+\beta Z(x)+\sum_{i=1}^{k}\alpha_{i}W_{i}(x)  }{\max\Big(1,\vnormf{g(x)+\beta Z(x)+\sum_{i=1}^{k}\alpha_{i}W_{i}(x)}\Big)}.$$

Define $L\colon\R^{2k+1}\to\R^{2k}$ by
$$L(\alpha,\theta,\beta)\colonequals \left(\E_{\gamma}f_{\alpha,\beta},\,\, \E_{\gamma}g_{\theta\beta}\right),\qquad\forall\,\alpha,\theta\in\R^{k},\,\forall\,\beta\in\R.$$
Then by assumption, we have
\begin{equation}\label{three1}
\frac{\partial L}{\partial\beta}(0,0)\stackrel{\eqref{vdef}}{=}\left(\frac{\d}{\d t}\Big|_{t=0}\E_{\gamma}\mathcal{V}_{t,W}f,\,\,\frac{\d}{\d t}\Big|_{t=0}\E_{\gamma}\mathcal{V}_{t,Z}g\right)
=0.
\end{equation}
And by definition of $L$, we have, for all $1\leq i\leq k$,
\begin{equation}\label{three2}
\begin{aligned}
\frac{\partial L}{\partial\alpha_{i}}(0,0)&=\left(\frac{\d}{\d t}\Big|_{t=0}\E_{\gamma}\mathcal{V}_{t,W_{i}}f,\,\, 0 \right).\\
\frac{\partial L}{\partial\theta_{i}}(0,0)&=\left(  0,\,\, \frac{\d}{\d t}\Big|_{t=0}\E_{\gamma}\mathcal{V}_{t,W_{i}}g \right).\\
\end{aligned}
\end{equation}
From Lemma \ref{lemma30}, we conclude that the matrix of partial derivatives $DL$ of $L$ is a $(2k+1)\times 2k$ matrix of rank $2k$.  So, by the Implicit Function Theorem, there exists $\epsilon>0$ and a differentiable curve $\eta\colon(-\epsilon,\epsilon)\to\R^{2k+1}$ with $\eta(0)=0$, $\eta'(0)\neq0$ and $L(\eta(t))=0$ for all $t\in(-\epsilon,\epsilon)$.  The last property implies, by the Chain Rule, that
\begin{equation}\label{three3}
0=\frac{\d}{\d t}\Big|_{t=0}L(\eta(t))
=\sum_{i=1}^{k}\frac{\partial L}{\partial\alpha_{i}}(0,0)\eta_{i}'(0)
+\sum_{i=1}^{k}\frac{\partial L}{\partial\theta_{i}}(0,0)\eta_{k+i}'(0)\qquad+\frac{\partial L}{\partial \beta}(0,0)\eta_{2k+1}'(0).\\
\end{equation}
The last term is zero by \eqref{three1}.  Lemma \ref{lemma30} and \eqref{three2} imply that set $\Big\{\frac{\partial L}{\partial\alpha_{i}}(0,0),\frac{\partial L}{\partial\theta_{i}}(0,0)\Big\}_{i=1}^{k}$ consists of $2k$ linearly independent vectors.  We conclude from \eqref{three3} that $\eta_{i}'(0)=\eta_{i+k}'(0)=0$ for all $1\leq i\leq k$.  Since $\eta'(0)\neq0$, we conclude that $\eta_{2k+1}'(0)\neq0$.  Let $J(t)\colonequals \underset{X\sim_{\rho}Y}{\E}\langle f_{(\eta_{i}(t))_{i=1}^{k},\eta_{2k+1}(t)}(X),\, g_{(\eta_{i+k}(t))_{i=1}^{k},\eta_{2k+1}(t)}(Y)\rangle$ for all $t\in(-\epsilon,\epsilon)$.  Since $L(\eta(t))=0$ for all $t\in(-\epsilon,\epsilon)$, $(f,g)$ satisfy $\E_{\gamma}f=\E_{\gamma}g=0$ for all $t\in(-\epsilon,\epsilon)$.  Since $(f,g)$ are optimally stable, we therefore have $(d/dt)J(t)=0$.  Using the previous facts and the chain rule,
\begin{flalign*}
0&=\frac{\d}{\d t}\Big|_{t=0}J(t)
=\sum_{i=1}^{k}\eta_{i}'(t)\frac{\partial }{\partial\alpha_{i}}\underset{X\sim_{\rho}Y}{\E}\langle f_{\alpha,\beta}(X),\, g_{\theta,\beta}(Y)\rangle
+\sum_{i=1}^{k}\eta_{i+k}'(t)\frac{\partial }{\partial\theta_{i}}\underset{X\sim_{\rho}Y}{\E}\langle f_{\alpha,\beta}(X),\, g_{\theta,\beta}(Y)\rangle\\
&\qquad+\eta_{2k+1}'(0)\frac{\d}{\d t}\Big|_{t=0}\underset{X\sim_{\rho}Y}{\E}\langle f_{\alpha,\eta_{2k+1}(t)}(X),\, g_{\theta,\eta_{2k+1}(t)}(Y)\rangle\\
&=\eta_{2k+1}'(0)\frac{\d}{\d t}\Big|_{t=0}\underset{X\sim_{\rho}Y}{\E}\langle f_{\alpha,\eta_{2k+1}(t)}(X),\, g_{\theta,\eta_{2k+1}(t)}(Y)\rangle\\
&\stackrel{\eqref{vdef}}{=}\eta_{2k+1}'(0)\frac{\d}{\d t}\Big|_{t=0}\underset{X\sim_{\rho}Y}{\E}\langle\mathcal{V}_{t,W}f(X),\, \mathcal{V}_{t,Z}g(Y)\rangle.
\end{flalign*}
Since $\eta_{2k+1}'(0)\neq0$, the last term is zero, i.e. the proof is concluded.
\end{proof}

\begin{lemma}[{\cite[Lemma 6.10]{hwang21}}]\label{lemma31Z}
Let $f,g$ be optimally stable.  $\exists$ $\lambda\in\R^{2k}$ such that
$$
\frac{\d}{\d t}\Big|_{t=0}\underset{X\sim_{\rho}Y}{\E}\langle\mathcal{V}_{t,W}f(X),\, \mathcal{V}_{t,Z}g(Y)\rangle
=\Big\langle \Big(\frac{\d}{\d t}\Big|_{t=0}\E_{\gamma}\mathcal{V}_{t,W}f ,\, \frac{\d}{\d t}\Big|_{t=0}\E_{\gamma}\mathcal{V}_{t,Z}g\Big),\lambda\Big\rangle.
$$
\end{lemma}
\begin{proof}
Let $W,Z\colon\R^{n}\to\R^{n}$, and define $\phi_{1},\phi_{2},\psi$ by
$$
\phi_{1}(W)\colonequals\frac{\d}{\d t}\Big|_{t=0}\E_{\gamma}\mathcal{V}_{t,W}f.
$$
$$
\phi_{2}(Z)\colonequals\frac{\d}{\d t}\Big|_{t=0}\E_{\gamma}\mathcal{V}_{t,Z}g.
$$
$$\psi(W,Z)\colonequals\frac{\d}{\d t}\Big|_{t=0}\underset{X\sim_{\rho}Y}{\E}\langle\mathcal{V}_{t,W}f(X),\, \mathcal{V}_{t,Z}g(Y)\rangle.$$
Note that $\phi_{1},\phi_{2},\psi$ are linear functions of $(W,Z)$.  Let $\mathcal{X}$ be a finite-dimensional subspace of bounded measurable vectors fields such that $\{(\phi_{1}(W),\phi_{2}(Z))\colon W,Z\in\mathcal{X}\}$ spans $\R^{2k}$.  Define $L\colon \mathcal{X}\to \R^{2k+1}$ by

$$L(W,Z)\colonequals\Big(\phi_{1}(W),\phi_{2}(Z),\psi(W,Z)\Big).$$
From Lemma \ref{lemma30}, $(0,\ldots,0,1)$ is not in the range of $L$.  So, there exists $\lambda=\lambda(\mathcal{X})\in\R^{2k}$ such that $(-\lambda,1)$ is orthogonal to the range of $L$ (with respect to the standard inner product in $\R^{2k+1}$.)   That is, $\langle L(W,Z), (-\lambda,1)\rangle=0$ for all $W,Z\in\mathcal{X}$.  That is, $\psi(W,Z)=\langle(\phi_{1}(W),\phi_{2}(Z)),\lambda\rangle$.  As in \cite[Lemma 6.10]{hwang21}, $\lambda$ does not depend on $\mathcal{X}$.
\end{proof}

For any $t\in\R$, denote $t_{+}\colonequals\max(t,0)$.
\begin{lemma}[Lemma 6.7, {\cite{hwang21}}]\label{lemma32}
Let $f\colon\R^{n}\to B^{k}$ be measurable.  Let $W\colon\R^{n}\to\R^{k}$ be a bounded measurable vector field.  For any $x\in\R^{n}$,
$$
\mathcal{V}_{t,W}f(x)= f(x)+tW(x)-t\langle W(x),f(x)\rangle_{+}f(x)+O(t^{2}),
$$
where the $O(t^{2})$ term is uniform in $x$.  Also,
$$
\frac{\d}{\d t}|_{t=0}\E_{\gamma}\mathcal{V}_{t,W}f
=\E_{\gamma}[W - \langle f,W\rangle_{+}f1_{\{\vnorm{f}=1\}}].
$$
\end{lemma}

We write $\lambda$ from Lemma \ref{lemma31} as $\lambda=(\lambda^{(1)},\lambda^{(2)})\in\R^{2k}$ so that $\lambda^{(i)}\in\R^{k}$ for $i=1,2$.

\begin{lemma}[\embolden{First Variation}, Lemma 6.11, {\cite{hwang21}}]\label{lemma33}
Let $f,g$ be optimally stable.  Then
$$\vnormf{T_{\rho}f-\lambda^{(1)}}g=T_{\rho}f-\lambda^{(1)},\qquad\vnormf{T_{\rho}g-\lambda^{(2)}}f=T_{\rho}g-\lambda^{(2)}.\qquad\mbox{a.s.}$$
\end{lemma}
\begin{proof}
Combining Lemmas \ref{lemma32} and \ref{lemma31Z}, Definition \ref{oudef}, Lemma \ref{lemma32} again, and Definition \ref{gausdef},
\begin{flalign*}
&\Big\langle \Big(  \E_{\gamma}[W - \langle f,W\rangle_{+}f1_{\{\vnorm{f}=1\}}],\, \E_{\gamma}[Z - \langle g,Z\rangle_{+}g1_{\{\vnorm{g}=1\}}]\Big),\lambda\Big\rangle\\
&=\Big\langle \Big(\frac{\d}{\d t}\Big|_{t=0}\E_{\gamma}\mathcal{V}_{t,W}f ,\, \frac{\d}{\d t}\Big|_{t=0}\E_{\gamma}\mathcal{V}_{t,Z}g\Big),\lambda\Big\rangle\\
&=\frac{\d}{\d t}\Big|_{t=0}\underset{X\sim_{\rho}Y}{\E}\langle\mathcal{V}_{t,W}f(X),\, \mathcal{V}_{t,Z}g(Y)\rangle\\
&=\underset{X\sim_{\rho}Y}{\E}\langle f(X),\, Z(Y)-\langle Z(Y),g(Y)\rangle_{+}g(Y)\rangle
+\underset{X\sim_{\rho}Y}{\E}\langle W(X)-\langle W(X),f(X)\rangle_{+}f(X),\, g(Y)\rangle\\
&=\E_{\gamma}\langle T_{\rho}f,\, Z-\langle Z,g\rangle_{+}g\rangle
+\E_{\gamma}\langle W-\langle W,f\rangle_{+}f,\, T_{\rho}g\rangle.
\end{flalign*} 
The above equality can be rearranged to be
$$
\E_{\gamma}\langle T_{\rho}f -\lambda^{(1)},\, Z-\langle Z,g\rangle_{+}g\rangle
+\E_{\gamma}\langle  T_{\rho}g - \lambda^{(2)},\,W-\langle W,f\rangle_{+}f\rangle
=0.
$$
The proof then concludes as in \cite[Lemma 6.11]{hwang21}.
\end{proof}

Let $W\colon\R^{n}\to\R^{n}$ be a tame vector field.  Define
$$F_{0}(x)\colonequals x,\quad \frac{\d}{\d t}F_{t}(x)=W(F_{t}(x)),\qquad\forall\,t\in\R,\,\forall\,x\in\R^{n}.$$
Denote $F_{t}=F_{t,W}$, where appropriate.
\begin{equation}\label{sdef}
\mathcal{S}_{t,W}f(x)\colonequals f(F_{t}^{-1}(x)),\qquad\forall\,t\in\R,\,\forall\,x\in\R^{n}.
\end{equation}
$$\mathrm{div}_{\gamma}W(x)\colonequals\mathrm{div}W(x)-\langle W(x),x\rangle.$$

\begin{lemma}[\embolden{First Variation}, Lemma 6.13, {\cite{hwang21}}]\label{lemma34}
Let $f,g\colon\R^{n}\to\R^{k}$ and let $W,Z\colon \R^{n}\to\R^{n}$ be tame vector fields.  Then
\begin{flalign*}
\frac{\d}{\d t}\Big|_{t=0}\underset{X\sim_{\rho}Y}{\E}\langle\mathcal{S}_{t,W}f(X),\, \mathcal{S}_{t,Z}g(Y)\rangle
&=\E_{\gamma}[\langle f,T_{\rho}f\rangle\mathrm{div}_{\gamma}W + \langle f, D_{W}T_{\rho}f\rangle]\\
&\quad+\E_{\gamma}[\langle g,T_{\rho}g\rangle\mathrm{div}_{\gamma}Z + \langle g, D_{Z}T_{\rho}g\rangle].
\end{flalign*}
\end{lemma}
Let $\lambda\in\R^{2k}$ from Lemma \ref{lemma31Z}.  Define
\begin{equation}\label{qdef}
Q(W,Z)\colonequals\frac{\d^{2}}{\d t^{2}}\Big|_{t=0}\underset{X\sim_{\rho}Y}{\E}\langle\mathcal{S}_{t,W}f(X),\, \mathcal{S}_{t,Z}g(Y)\rangle
-\Big\langle\lambda,\,\Big(\frac{\d^{2}}{\d t^{2}}\Big|_{t=0}\E_{\gamma} \mathcal{S}_{t,W}f ,\, \frac{\d^{2}}{\d t^{2}}\Big|_{t=0}\E_{\gamma} \mathcal{S}_{t,Z}g \Big) \Big\rangle.
\end{equation}

\begin{lemma}[\embolden{Second Variation}, Lemma 6.15, {\cite{hwang21}}]\label{lemma35}
Let $W,Z\colon\R^{n}\to\R^{n}$ be tame vector fields such that
$$\frac{\d}{\d t}\Big|_{t=0}\E_{\gamma}\mathcal{S}_{t,W}f=\frac{\d}{\d t}\Big|_{t=0}\E_{\gamma}\mathcal{S}_{t,Z}g=0.$$
Let $f,g$ be optimally stable.  Then
$$Q(W,Z)\geq0.$$
\end{lemma}
\begin{proof}
Similar to Lemma \ref{lemma31}, for any $\alpha\in\R^{k}$ and $\beta\in\R$, define

$$f_{\alpha,\beta}\colonequals\frac{f(F_{\beta,W}(x))+\sum_{i=1}^{k}\alpha_{i}W_{i}(x)  }{\max\Big(1,\vnormf{f(F_{\beta,W}(x))+\sum_{i=1}^{k}\alpha_{i}W_{i}(x)}\Big)}.$$
$$g_{\alpha,\beta}\colonequals\frac{g(F_{\beta,Z}(x))+\sum_{i=1}^{k}\alpha_{i}W_{i}(x)  }{\max\Big(1,\vnormf{g(F_{\beta,Z}(x))+\sum_{i=1}^{k}\alpha_{i}W_{i}(x)}\Big)}.$$

Define $L\colon\R^{2k+1}\to\R^{2k}$ by
$$L(\alpha,\theta,\beta)\colonequals \left(\E_{\gamma}f_{\alpha,\beta},\,\, \E_{\gamma}g_{\theta\beta}\right),\qquad\forall\,\alpha,\theta\in\R^{k},\,\forall\,\beta\in\R.$$
Then by assumption, we have
\begin{equation}\label{three1z}
\frac{\partial L}{\partial\beta}(0,0)=\left(\frac{\d}{\d t}\Big|_{t=0}\E_{\gamma}\mathcal{S}_{t,W}f,\,\, \frac{\d}{\d t}\Big|_{t=0}\E_{\gamma}\mathcal{S}_{t,Z}g\right)
=0.
\end{equation}
And by definition of $L$, we have, for all $1\leq i\leq k$,
\begin{equation}\label{three2z}
\begin{aligned}
\frac{\partial L}{\partial\alpha_{i}}(0,0)&=\left(\frac{\d}{\d t}\Big|_{t=0}\E_{\gamma}\mathcal{S}_{t,W_{i}}f, \,\, 0 \right).\\
\frac{\partial L}{\partial\theta_{i}}(0,0)&=\left(  0, \,\, \frac{\d}{\d t}\Big|_{t=0}\E_{\gamma}\mathcal{S}_{t,W_{i}}g \right).\\
\end{aligned}
\end{equation}
From Lemma \ref{lemma30}, we conclude that the matrix of partial derivatives $DL$ of $L$ is a $(2k+1)\times 2k$ matrix of rank $2k$.  So, by the Implicit Function Theorem, there exists $\epsilon>0$ and a differentiable curve $\eta\colon(-\epsilon,\epsilon)\to\R^{2k+1}$ with $\eta(0)=0$, $\eta'(0)\neq0$ and $L(\eta(t))=0$ for all $t\in(-\epsilon,\epsilon)$.  The last property implies, by the Chain Rule, that
\begin{equation}\label{three3z}
0=\frac{\d}{\d t}\Big|_{t=0}L(\eta(t))
=\sum_{i=1}^{k}\frac{\partial L}{\partial\alpha_{i}}(0,0)\eta_{i}'(0)
+\sum_{i=1}^{k}\frac{\partial L}{\partial\theta_{i}}(0,0)\eta_{k+i}'(0)\qquad+\frac{\partial L}{\partial \beta}(0,0)\eta_{2k+1}'(0).\\
\end{equation}
The last term is zero by \eqref{three1z}.  Lemma \ref{lemma30} and \eqref{three2z} imply that set $\Big\{\frac{\partial L}{\partial\alpha_{i}}(0,0),\frac{\partial L}{\partial\theta_{i}}(0,0)\Big\}_{i=1}^{k}$ consists of $2k$ linearly independent vectors.  We conclude from \eqref{three3z} that $\eta_{i}'(0)=\eta_{i+k}'(0)=0$ for all $1\leq i\leq k$.  Since $\eta'(0)\neq0$, we conclude that $\eta_{2k+1}'(0)\neq0$.  (Observe $L(\eta(t))=0$ for all $t\in(-\epsilon,\epsilon)$, i.e. $\E_{\gamma}f=\E_{\gamma}g=0$ for all $t\in(-\epsilon,\epsilon)$, so the desired constraints hold for $f,g$ for all $t\in(-\epsilon,\epsilon)$.)

Taking another derivative of \eqref{three3z} and using $\eta_{i}'(0)=\eta_{i+k}'(0)=0$ for all $1\leq i\leq k$ along with $L(\eta(t))=0$ for all $t\in(-\epsilon,\epsilon)$ and \eqref{three1z},
\begin{equation}\label{three4}
0=\frac{\d^{2}}{\d t^{2}}\Big|_{t=0}L(\eta(t))
=\sum_{i=1}^{k}\frac{\partial L}{\partial\alpha_{i}}(0,0)\eta_{i}''(0)
+\sum_{i=1}^{k}\frac{\partial L}{\partial\theta_{i}}(0,0)\eta_{k+i}''(0)\qquad+\frac{\partial^{2} L}{\partial \beta^{2}}(0,0)(\eta_{2k+1}'(0))^{2}.\\
\end{equation}

Define $J_{0}(\alpha,\theta,\beta)\colonequals \underset{X\sim_{\rho}Y}{\E}\langle f_{\alpha,\beta}(X),\, g_{\theta,\beta}(Y)\rangle$ and $J_{1}(t)\colonequals J_{0}(\eta(t))$.  Then, since $f,g$ are optinally stable and $L(\eta(t))=0$ for all $t\in(-\epsilon,\epsilon)$,  we have $J_{1}'(0)=0$.  Since $\eta_{i}'(0)=\eta_{i+k}'(0)=0$ for all $1\leq i\leq k$, we have $0=J_{1}'(0)=\eta_{2k+1}'(0)\frac{\partial J_{0}}{\partial\beta}(0,0)$.  Since $\eta_{2k+1}'(0)\neq0$, we have $\frac{\partial J_{0}}{\partial\beta}(0,0)=0$.  Taking another derivative of $J_{1}$ and using optimality of $f,g$, we have
\begin{equation}\label{three5}
0\leq J_{1}''(0)=\sum_{i=1}^{k}\frac{\partial J_{0}}{\partial\alpha_{i}}(0,0)\eta_{i}''(0)
+\sum_{i=1}^{k}\frac{\partial J_{0}}{\partial\theta_{i}}(0,0)\eta_{k+i}''(0)\qquad+\frac{\partial^{2} J_{0}}{\partial \beta^{2}}(0,0)(\eta_{2k+1}'(0))^{2}.
\end{equation}
Here we again used $\eta_{i}'(0)=\eta_{i+k}'(0)=0$ for all $1\leq i\leq k$.  Lemma \ref{lemma34} says $\frac{\partial}{\partial\alpha_{i}}J_{0}=\langle\lambda, \frac{\partial}{\partial\alpha_{i}}L\rangle$ for all $1\leq i\leq k$ and $\frac{\partial}{\partial\theta_{i}}J_{0}=\langle\lambda, \frac{\partial}{\partial\theta_{i}}L\rangle$ for all $1\leq i\leq k$.  So, we can rewrite \eqref{three5} as
\begin{equation}\label{three6}
\begin{aligned}
0\leq J_{1}''(0)
&=\sum_{i=1}^{k}\eta_{i}''(0)\langle\lambda, \frac{\partial}{\partial\alpha_{i}}L\rangle
+\sum_{i=1}^{k}\eta_{k+i}''(0)\langle\lambda, \frac{\partial}{\partial\theta_{i}}L\rangle\qquad+\frac{\partial^{2} J_{0}}{\partial \beta^{2}}(0,0)(\eta_{2k+1}'(0))^{2}\\
&\stackrel{\eqref{three4}}{=}-(\eta_{2k+1}'(0))^{2}\Big\langle\lambda,\frac{\partial^{2} L}{\partial \beta^{2}}(0,0)\Big\rangle+\frac{\partial^{2} J_{0}}{\partial \beta^{2}}(0,0)(\eta_{2k+1}'(0))^{2}.
\end{aligned}
\end{equation}
Since $(\eta_{2k+1}'(0))^{2}>0$, we can factor it out of \eqref{three6} to get
$$
0\leq -\Big\langle\lambda,\frac{\partial^{2} L}{\partial \beta^{2}}(0,0)\Big\rangle+\frac{\partial^{2} J_{0}}{\partial \beta^{2}}(0,0).
$$
This inequality concludes the proof, since the right side is $Q(W,Z)$.
\end{proof}

\begin{lemma}[\embolden{Second Variation of Translations}, adapted from Lemma 6.17, {\cite{hwang21}}]\label{lemma36}
Fix $w\in\R^{n}$.  Let $W\colon\R^{n}\to\R^{n}$ be the constant vector field $W\colonequals w$.  Let $f,g$ be optimally stable.  Assume that
$$\frac{\d}{\d t}\Big|_{t=0}\E_{\gamma}\mathcal{S}_{t,W}f=\frac{\d}{\d t}\Big|_{t=0}\E_{\gamma}\mathcal{S}_{t,Z}g=0.$$
Then
$$Q(W,W)=2(1-1/\rho)\E_{\gamma}\vnormf{D_{w}T_{\sqrt{\rho}}f}^{2}.$$
\end{lemma}
\begin{proof}
Let $z\in\R^{n}$.  Let $Z\colon\R^{n}\to\R^{n}$ be the constant vector field $Z\colonequals z$.  From \eqref{sdef},  $\mathcal{S}_{t,W}f(x)=f(x-tw)$ for all $t\in\R$, $x\in\R^{n}$.  So,
\begin{equation}\label{three7}
\frac{\d^{2}}{\d t^{2}}\Big|_{t=0}\E_{\gamma}\mathcal{S}_{t,W}f=\E[D_{w}(D_{w}f)]
,\qquad\frac{\d^{2}}{\d t^{2}}\Big|_{t=0}\E_{\gamma}\mathcal{S}_{t,Z}g=\E[D_{z}(D_{z}g)].
\end{equation}
Using the product rule and Definition \ref{oudef},
\begin{flalign*}
&\frac{\d^{2}}{\d t^{2}}\Big|_{t=0}\underset{X\sim_{\rho}Y}{\E}\langle\mathcal{S}_{t,W}f(X),\, \mathcal{S}_{t,Z}g(Y)\rangle\\
&\qquad=\underset{X\sim_{\rho}Y}{\E}\langle D^{2}_{w,w}f(X), g(Y)\rangle
+\underset{X\sim_{\rho}Y}{\E}\langle f(X), D^{2}_{z,z}g(Y)\rangle
+2\underset{X\sim_{\rho}Y}{\E}\langle D_{w}f(X), D_{z}g(Y)\rangle\\
&\qquad=\E_{\gamma}\langle D^{2}_{w,w}f, T_{\rho}g\rangle
+\E_{\gamma}\langle T_{\rho}f, D^{2}_{z,z}g\rangle
+2\E_{\gamma}\langle D_{w}f, T_{\rho}D_{z}g\rangle\\
&\qquad=\E_{\gamma}\langle D^{2}_{w,w}f, T_{\rho}g\rangle
+\E_{\gamma}\langle T_{\rho}f, D^{2}_{z,z}g\rangle
+\frac{2}{\rho}\E_{\gamma}\langle D_{w}f, D_{z}T_{\rho}g\rangle.
\end{flalign*}

Substituting the definition of $Q$ \eqref{qdef}, using \eqref{three7}, then integrating by parts twice,
\begin{equation}\label{three8}
\begin{aligned}
Q(W,Z)
&=\E_{\gamma}\langle D^{2}_{w,w}f, T_{\rho}g - \lambda^{(2)}\rangle
+\E_{\gamma}\langle T_{\rho}f - \lambda^{(1)}, D^{2}_{z,z}g\rangle
+\frac{2}{\rho}\E_{\gamma}\langle D_{w}f, D_{z}T_{\rho}g\rangle\\
&=-\sum_{i=1}^{k}\E_{\gamma}D_{w}f_{i}\mathrm{div}_{\gamma}(w(T_{\rho}g_{i} - \lambda_{i}^{(2)}))
-\sum_{i=1}^{k}\E_{\gamma}D_{z}g_{i}\mathrm{div}_{\gamma}(z(T_{\rho}f_{i} - \lambda_{i}^{(1)}))\\
&\qquad\qquad\qquad\qquad\qquad\qquad\qquad\qquad
-\frac{2}{\rho}\sum_{i=1}^{k}\E_{\gamma}f_{i}\mathrm{div}_{\gamma}(wD_{z}T_{\rho}g_{i})\rangle\\
&=\sum_{i=1}^{k}\E_{\gamma}f_{i}\mathrm{div}_{\gamma}(w\mathrm{div}_{\gamma}(w(T_{\rho}g_{i} - \lambda_{i}^{(2)})))
+\sum_{i=1}^{k}\E_{\gamma}g_{i}\mathrm{div}_{\gamma}(z\mathrm{div}_{\gamma}(z(T_{\rho}f_{i} - \lambda_{i}^{(1)})))\\
&\qquad\qquad\qquad\qquad\qquad\qquad\qquad\qquad
-\frac{2}{\rho}\sum_{i=1}^{k}\E_{\gamma}f_{i}\mathrm{div}_{\gamma}(wD_{z}T_{\rho}g_{i})\rangle.
\end{aligned}
\end{equation}

We examine the penultimate terms.  From the First Variation Lemma \ref{lemma33}, a.s.
$$\vnormf{T_{\rho}f-\lambda^{(1)}}g=T_{\rho}f-\lambda^{(1)},\qquad\vnormf{T_{\rho}g-\lambda^{(2)}}f=T_{\rho}g-\lambda^{(2)}.$$
So, using first the product rule then Lemma \ref{lemma33},
\begin{flalign*}
\sum_{i=1}^{k}\E_{\gamma}f_{i}\mathrm{div}_{\gamma}(w\mathrm{div}_{\gamma}(w(T_{\rho}g_{i} - \lambda_{i}^{(2)})))
&=\sum_{i=1}^{k}\E_{\gamma}f_{i}\mathrm{div}_{\gamma}((T_{\rho}g_{i} - \lambda_{i}^{(2)})w\mathrm{div}_{\gamma}(w)\,\,+w D_{w}T_{\rho}g_{i}  )     \\
&=\sum_{i=1}^{k}\E_{\gamma}f_{i}\mathrm{div}_{\gamma}(\vnormf{T_{\rho}g_{i} - \lambda_{i}^{(2)}}f_{i}w\mathrm{div}_{\gamma}(w)\,\,+w D_{w}T_{\rho}g_{i}  ).  
\end{flalign*}
The penultimate term here is zero by \cite[Lemma 6.4]{hwang21}, and a similar calculation for the other term in \eqref{three8} allows us to rewrite \eqref{three8} as
\begin{flalign*}
Q(W,Z)
&=\sum_{i=1}^{k}\E_{\gamma}f_{i}\mathrm{div}_{\gamma}(w D_{w}T_{\rho}g_{i})
+\sum_{i=1}^{k}\E_{\gamma}g_{i}\mathrm{div}_{\gamma}(z D_{z}T_{\rho}f_{i})\\
&\qquad\qquad\qquad\qquad\qquad\qquad\qquad\qquad
-\frac{2}{\rho}\sum_{i=1}^{k}\E_{\gamma}f_{i}\mathrm{div}_{\gamma}(wD_{z}T_{\rho}g_{i})\rangle.
\end{flalign*}
Lemma \ref{lemma37}, rewrites this as
$$Q(W,Z)=-\E_{\gamma}\langle D_{w}T_{\sqrt{\rho}}f,\, D_{w}T_{\sqrt{\rho}}g\rangle
-\E_{\gamma}\langle D_{z}T_{\sqrt{\rho}}f,\, D_{z}T_{\sqrt{\rho}}g\rangle
+\frac{2}{\rho}\E_{\gamma}\langle D_{w}T_{\sqrt{\rho}}f,\, D_{z}T_{\sqrt{\rho}}g\rangle.$$

Using now Lemma \ref{lemma0}, we may assume that $g=-f$, so that
$$Q(W,Z)=\E_{\gamma}\langle D_{w}T_{\sqrt{\rho}}f,\, D_{w}T_{\sqrt{\rho}}f\rangle
+\E_{\gamma}\langle D_{z}T_{\sqrt{\rho}}f,\, D_{z}T_{\sqrt{\rho}}f\rangle
-\frac{2}{\rho}\E_{\gamma}\langle D_{w}T_{\sqrt{\rho}}f,\, D_{z}T_{\sqrt{\rho}}f\rangle.$$

Choosing $w=z$, we get
$$
Q(W,W)=2(1-1/\rho)\E_{\gamma}\vnormf{D_{w}T_{\sqrt{\rho}}f}^{2}.
$$
\end{proof}

\begin{lemma}[Lemma 6.6, {\cite{hwang21}}]\label{lemma37}
Let $w\in\R^{n}$.  Then
$$\sum_{i=1}^{k}\E_{\gamma}f_{i}\mathrm{div}_{\gamma}(w D_{z}T_{\rho}g_{i})
=-\E_{\gamma}\langle D_{w}T_{\sqrt{\rho}}g,\, D_{z}T_{\sqrt{\rho}}f\rangle.$$
\end{lemma}

\begin{proof}[Proof of Theorem \ref{thm9}]
Let $f,g\colon\R^{n}\to S^{k-1}$ be optimally stable.  Let $w\in\R^{n}$.  Define $W\colon\R^{n}\to\R^{n}$ so that $W(x)\colonequals w$ for all $x\in\R^{n}$.  Let $W$ satisfy
\begin{equation}\label{ten1}
\frac{\d}{\d t}\Big|_{t=0}\E_{\gamma}\mathcal{S}_{t,W}f=\frac{\d}{\d t}\Big|_{t=0}\E_{\gamma}\mathcal{S}_{t,W}g=0.
\end{equation}
Since $f=-g$ by Lemma \ref{lemma0}, \eqref{ten1} is equivalent to
\begin{equation}\label{ten2}
\frac{\d}{\d t}\Big|_{t=0}\E_{\gamma}\mathcal{S}_{t,W}f=0.
\end{equation}
Define $L\colon \R^{n}\to \R^{k}$ by $L(w)\colonequals\frac{\d}{\d t}|_{t=0}\E_{\gamma}\mathcal{S}_{t,W}f$.  Let $\mathrm{ker}(L)\colonequals\{w\in\R^{n}\colon L(w)=0\}$ denote the kernel of $L$.  Since $L$ is a linear map, the rank-nullity theorem implies that $\mathrm{ker}(L)$ has dimension at least $n-k$.

Let $w\in\mathrm{ker}(L)$.  Then Lemma \ref{lemma35} says $Q(W,W)\geq0$, while Lemma \ref{lemma36} says
$$Q(W,W)=2(1-1/\rho)\E_{\gamma}\vnormf{D_{w}T_{\sqrt{\rho}}f}^{2}\leq0.$$
We conclude that there exists a linear subspace $\mathrm{ker}(L)\subset\R^{n}$ of dimension at least $n-k$ with
$$\E_{\gamma}\vnormf{D_{w}T_{\sqrt{\rho}}f}^{2}=0,\qquad\forall\,w\in\mathrm{ker}(L).$$
That is, we may assume that $f$ is constant on $\mathrm{ker}(L)$, as noted e.g. in \cite[Lemma 6.6]{hwang21}.  That is, we may assume a priori that $f\colon\R^{k}\to\R^{k-1}$.  Theorem \ref{thm9} follows.
\end{proof}

\noindent
\textbf{Acknowledgement}.  We acknowledge and thank Yeongwoo Hwang, Joe Neeman, Ojas Parekh, Kevin Thompson and John Wright for helpful discussions.

\bibliographystyle{amsalpha}

\newcommand{\etalchar}[1]{$^{#1}$}
\def\polhk#1{\setbox0=\hbox{#1}{\ooalign{\hidewidth
  \lower1.5ex\hbox{`}\hidewidth\crcr\unhbox0}}} \def\cprime{$'$}
  \def\cprime{$'$}
\providecommand{\bysame}{\leavevmode\hbox to3em{\hrulefill}\thinspace}
\providecommand{\MR}{\relax\ifhmode\unskip\space\fi MR }
\providecommand{\MRhref}[2]{%
  \href{http://www.ams.org/mathscinet-getitem?mr=#1}{#2}
}
\providecommand{\href}[2]{#2}

\end{document}